\documentclass[a4paper,fleqn]{cas-sc}
\usepackage[authoryear,longnamesfirst]{natbib}

\def\tsc#1{\csdef{#1}{\textsc{\lowercase{#1}}\xspace}}
\tsc{WGM} \tsc{QE} \tsc{EP} \tsc{PMS} \tsc{BEC} \tsc{DE}

\usepackage{amssymb}
\usepackage{lipsum}
\usepackage{graphicx}%
\usepackage{multirow}%
\usepackage{tabularx}
\usepackage{amsmath,amssymb,amsfonts}%
\usepackage{amsthm}%
\usepackage{mathrsfs}%
\usepackage[title]{appendix}%
\usepackage{xcolor}%
\usepackage{textcomp}%
\usepackage{manyfoot}%
\usepackage{booktabs}%
\usepackage{algorithm}%
\usepackage{algorithmicx}%
\usepackage{algpseudocode}%
\usepackage{listings}
\usepackage{appendix}

\theoremstyle{thmstyleone}
\newtheorem{theorem}{Theorem}

\theoremstyle{thmstyletwo}

\theoremstyle{thmstylethree}

\begin{document}
	\let\WriteBookmarks\relax
	\def\floatpagepagefraction{1}
	\def\textpagefraction{.001}
	\shorttitle{Mathematical modeling of tumor-macrophage interactions}
	\shortauthors{H.F. Zhang et~al.}
	
	\title [mode = title]{Mathematical model of tumor-macrophage interactions: Elucidating the tumor-driven macrophage phenotype reprogramming}  
	
	\author[1]{Haifeng Zhang}[orcid=0009-0002-1223-9927]
	\ead{haifengzhang202212@163.com}
	\affiliation[1]{organization={School of Mathematical Sciences},
		addressline={Jiangsu University}, 
		city={Zhenjiang},
		postcode={212013}, 
		country={China}}
	
	\author[2]{Yipu Qu}[orcid=0009-0008-9779-1975]
	\ead{quyipu@tiangong.edu.cn}
	\affiliation[2]{organization={School of Mathematical Sciences},
		addressline={Tiangong University}, 
		city={Tianjin},
		postcode={300387}, 
		country={China}}
	
	\author[3]{Wuyue Yang}[orcid=0000-0002-8270-8840]
	\ead{yangwuyue@bimsa.cn}
	\affiliation[3]{organization={Beijing Institute of Mathematical Sciences and Applications},
		city={Beijing},
		postcode={101408}, 
		country={China}}
	
	\author[2]{Chenghang Li}[orcid=0009-0008-4290-221X]
	\cormark[1]
	\ead{lichenghang@tiangong.edu.cn}
	\cortext[cor1]{Corresponding author}
	
	\begin{abstract}
		The interplay between tumor cells and macrophages plays a central regulatory role in cancer progression. In this study, we developed a mathematical model that incorporates tumor cells, M1-type macrophages, M2-type macrophages and an M3-type macrophage population characterized by dual phenotypic features. First, we analyzed the fundamental mathematical properties of the model and derived the conditions under which the system attains a tumor‑free stable state or a coexistence state of tumor and immune cells. Second, global sensitivity analysis revealed that key parameters governing macrophage polarization and intercellular communication vary dynamically during tumor development. Bifurcation analysis further identified the polarization rate of M1‑type macrophages ($\kappa$) and the baseline level of resting macrophages ($M_0$) as critical determinants of the system's dynamical behavior. Notably, using approximate Bayesian computation for parameter inference and dynamic simulations, the model successfully recapitulated the evolutionary trajectories of eight tumor samples. The results demonstrate that lower tumor burden is significantly associated with higher M1‑type macrophage infiltration and delayed peak time of M3‑type macrophage activation. Moreover, survival analysis indicated that both enhanced M1‑type macrophage infiltration and delayed peak time of M3‑type macrophage activation are correlated with longer survival time. In summary, this study not only provides a theoretical framework for understanding the dynamic mechanisms underlying tumor‑macrophage interactions but also proposes two potential clinical prognostic markers: the level of M1‑type macrophage infiltration and the peak time of M3‑type macrophage activation.
	\end{abstract}
	
	
	\begin{keywords}
		Macrophage polarization \sep Mathematical model \sep Virtual cohort \sep Approximate Bayesian computation \sep Bifurcation analysis \sep Sensitivity analysis
	\end{keywords}
	
	\maketitle
	
	\section{Introduction}
	
	Tumor-associated macrophages (TAMs) represent one of the most abundant immune cell populations within the tumor microenvironment (TME) \citep{DeNardo.NatRevImmunol.2019}. As a highly heterogeneous population, macrophages exhibit striking functional duality. They can exert both anti-tumor surveillance and pro-tumor promotional effects, a property that is tightly linked to their dynamic phenotypic plasticity \citep{Kloosterman.Cell.2023}. Based on their activation states and functional properties, TAMs are broadly categorized into three key subsets: (1) M1-type macrophages ($M_1$), activated by interferon gamma (IFN-$\gamma$) or lipopolysaccharide (LPS). They inhibit tumor growth by secreting pro-inflammatory cytokines such as IL-12 and tumor necrosis factor-alpha (TNF-$\alpha$) to stimulate immune responses, and through direct cytotoxic effects \citep{Lawrence.NatRevImmunol.2011,Noy.Immunity.2014,Xu.SigTransductTargetTher.2025}. (2) M2-type macrophages ($M_2$), induced by IL-4, IL-10, or IL-13. These cells promote tumor progression by driving angiogenesis, secreting growth factors like vascular endothelial growth factor (VEGF), transforming growth factor  $\beta$ (TGF-$\beta$) and suppressing adaptive immunity \citep{Lawrence.NatRevImmunol.2011,Noy.Immunity.2014,Xu.SigTransductTargetTher.2025}. (3) M3-type macrophages ($M_3$), an intermediate state with hybrid functions. They retain the capacity to differentiate toward either M1 or M2 lineages in response to microenvironmental signals \citep{Dunsmore.SciImmunol.2024}. The dynamic interconversion among macrophage phenotypes actively drives tumor progression by enabling immune evasion and adaptation to the microenvironment \citep{Biswas.NatImmunol.2010,Locati.AnnuRevPathol.2020}. Hence, elucidating how tumors and macrophages interact is critical for understanding the progression and evolutionary dynamics of cancer.
	
	In addition, treatment resistance is widely recognized as a major cause of therapeutic failure in oncology. As reported, TAMs can mediate resistance to anti-cancer therapies \citep{Xiao.FrontImmunol.2021,Huang.CritRevOncolHematol.2024}. Meanwhile, macrophage-targeted therapeutic strategies have the potential to overcome treatment resistance and improve clinical outcomes \citep{Mantovani.NatRevClinOncol.2017,Xu.SigTransductTargetTher.2025}. Therefore, a detailed understanding of the tumor-macrophage crosstalk serves as a critical bridge between deciphering cancer evolution and paving the way for innovative treatments. In this work, we employed a four-dimensional mathematical model to elucidate the influence of tumor-macrophage crosstalk on cancer progression and to identify potential targets for therapeutic intervention.
	
	With the rapid development of tumor immunology, a series of mathematical models have been developed to elucidate the dynamic regulatory mechanisms governing tumor-macrophage interactions. Breems et al. \citep{Breems.JTheorBiol.2016} incorporated the dynamics of Th1/Th2 cells and M1/M2 macrophages into a unified mathematical framework, thereby enabling an integrated analysis of the interplay between innate and adaptive immunity. Building upon this framework, Eftimie et al. \citep{Eftimie.JTheorBiol.2017} investigated the role of CD4+ T cell-macrophage interactions in the context of melanoma immunotherapy. Their findings indicated that tumors can be eliminated under either type I (Th1/M1-dominated) or type II (Th2/M2-dominated) immune responses, whereas tumor progression is generally associated with a type II response. In a subsequent study, Eftimie et al. \citep{Eftimie.ActaBiotheor.2019} employed a mathematical model to investigate tumor-immune dynamics during oncolytic virotherapy, identifying M2-to-M1 macrophage repolarization as a critical determinant for therapeutic efficacy. Recently, Shu et al. \citep{Shu.ApplMathModel.2020} introduced a simplified three-dimensional mathematical model, including tumor cells, M1-type and M2-type macrophages. Through Hopf bifurcation analysis, they demonstrated that the system can exhibit oscillatory behaviors, reflecting biological scenarios such as long-term tumor-immune coexistence and recurrence. Additionally, Guo et al. \citep{Guo.DCDSB.2023} further investigated the therapeutic implications of M2-type macrophage repolarization in cancer treatment. Collectively, these studies utilize ordinary differential equation models to simulate tumor-macrophage interactions and provide a theoretical basis for the development of innovative anticancer strategies.
	
	
	In recent years, multiscale mathematical models have emerged as powerful tools to quantitatively characterize the dynamic interactions between tumors and macrophage phenotypic heterogeneity. Lampropoulos et al. \citep{Lampropoulos.BullMathBiol.2025} developed a multiphase fluid dynamics model that incorporates macrophage polarization, molecular networks and TGF-$\beta$ targeting. The study revealed distinct spatial distribution: M1-type macrophages predominantly localized at the tumor periphery, whereas M2-type macrophages accumulated in the hypoxic core. Furthermore, blockade of TGF-$\beta$ receptors was shown to inhibit the M1-to-M2 phenotypic transition. Concurrently, Zheng et al. \citep{Zheng.MolCancerTher.2018} constructed a spatiotemporal model revealing macrophage-mediated mechanisms of treatment resistance in glioma immunotherapy. Building on this, Lin et al. \citep{Lin.MultiscaleModelSim.2025} developed a multiscale model integrating molecular, cellular and microenvironmental dynamics to elucidate mechanisms of immunotherapy resistance in glioblastoma mediated by tumor-macrophage crosstalk. Meanwhile, Liu et al. \citep{Liu.SciAdv.2025} introduced a multiscale mathematical model-informed reinforcement learning (M4RL) framework to simulate tumor-macrophage dynamics and optimize combination drug regimens. Together, these studies represent significant advances in the field of tumor immunotherapy, providing critical  theoretical and computational frameworks for deciphering macrophage-driven resistance and informing improved therapeutic strategies.
	
	
	Recently, single-cell transcriptomic lineage analysis has identified an intermediate macrophage population with the capacity for bidirectional transition between classical M1 and M2 states \citep{Dunsmore.SciImmunol.2024}. This discovery fundamentally challenges the traditional M1/M2 dichotomy and uncovers a previously underrecognized dimension of macrophage phenotypic plasticity. Such plasticity is therefore critical to explaining the functional diversity and contextual adaptability of macrophages in physiological homeostasis, pathological progression and therapeutic contexts. To investigate the dynamics of continuous macrophage phenotype switching, Eftimie et al. \citep{Eftimie.MathModMethAppliS.2020,Eftimie.MathBiosci.2020} proposed a mathematical model grounded in active particle dynamics theory. Subsequently, Eftimie et al. \citep{Eftimie.JTheorBiol.2021,Bartha.JTheorBiol.2022} further investigated the interactions among M1, M2, and intermediate macrophage phenotypes. To the best of our knowledge, the majority of existing mechanistic models in this field incorporate only the M1 and M2 macrophage subtypes. Beyond the studies referenced above, few studies have investigated the collective impact of three distinct macrophage populations on tumor dynamics within a unified modeling framework. However, macrophages exhibit functionally divergent roles in tumor regulation and the cooperative interplay of three macrophage subtypes shapes tumor evolution. Thus, M3-type macrophages should be explicitly integrated as a discrete population in such models. Elucidating this potential regulatory network will provide critical insights into the mechanisms underlying cancer elimination and progression.
	
	
	In this study, we developed a mathematical model that incorporates tumor cells along with M1-type, M2-type and a hybrid M3-type macrophage population to investigate the dynamics of tumor-macrophage interactions. The key findings are as follows: (1) Theoretical analysis established a critical threshold for the tumor growth rate ($\alpha^*$). When the actual growth rate falls below $\alpha^*$, the system converges to a stable tumor-free equilibrium, representing successful immune-mediated clearance; otherwise, a tumor-persistent equilibrium emerges. (2) Global sensitivity analysis uncovered a time-dependent regulatory logic: early tumor control is primarily driven by the proliferation rate ($\alpha$), whereas later dynamics are dominated by the tumor carrying capacity ($K$). (3) Bifurcation analysis revealed bistability between tumor-free and high-tumor states, offering a mechanistic explanation for the heterogeneous progression patterns observed in cancer. (4) Using approximate Bayesian computation, we clustered the \textit{in silico} mouse cohort into distinct phenotypic groups: high, medium, and low tumor burden. Lower tumor burden was strongly predicted by greater cumulative M1-type macrophage activity and a delayed peak in the M3-type macrophage population. (5) Survival analysis confirmed that both elevated M1-type macrophage activity and a delayed M3-type macrophage peak correlate with prolonged survival, highlighting their potential as prognostic biomarkers. Collectively, the model provides a quantitative framework for linking macrophage phenotypic plasticity to tumor dynamics.
	
	
	The paper is organized as follows. In Section \ref{sec:sec2}, we established a mathematical model to decipher the underlying mechanisms governing the dynamic relationship between tumor cells and macrophages. In Section \ref{sec:sec3}, we performed theoretical analysis and numerical analysis to investigate the dynamics of the proposed model. The study included: (1) analysis of the existence and stability of steady states; (2) model validation and parameter sensitivity analysis; (3) bifurcation analysis of the polarization rate of M1‑type macrophages and the baseline level of resting macrophages; (4) population dynamics analysis of tumor-macrophage interaction via approximate Bayesian computation; (5) the influence of macrophage heterogeneity on the proposed model; (6) survival analysis via particle swarm optimization. Finally, in Section \ref{sec:sec4}, we briefly discussed our conclusions.
	
	\section{Mathematical model}\label{sec:sec2}
	
	Traditionally, research on tumor-associated macrophages (TAMs) has largely on two polarized states: the pro-inflammatory M1-type phenotype and the pro-tumorigenic M2-type phenotype. However, emerging evidence indicates that TAMs can adopt a hybrid phenotype, referred to herein as M3-type macrophages, which co-expresses functional features of both M1-type and M2-type macrophages \citep{Sunil.SciTranslMed.2019,Watanabe.JClinInvest.2019,Dunsmore.SciImmunol.2024}. This phenotypic heterogeneity complicates a unified understanding of macrophage plasticity and poses challenges for predicting the efficacy of therapeutic strategies targeting TAMs. To quantitatively investigate the role of macrophages in tumor development, we developed a mathematical model of tumor-macrophage interactions, as illustrated in Fig. \ref{Fig1}. 
	
	\begin{figure}[htbp]
		\centering
		\centerline{\includegraphics[width=0.55\textwidth]{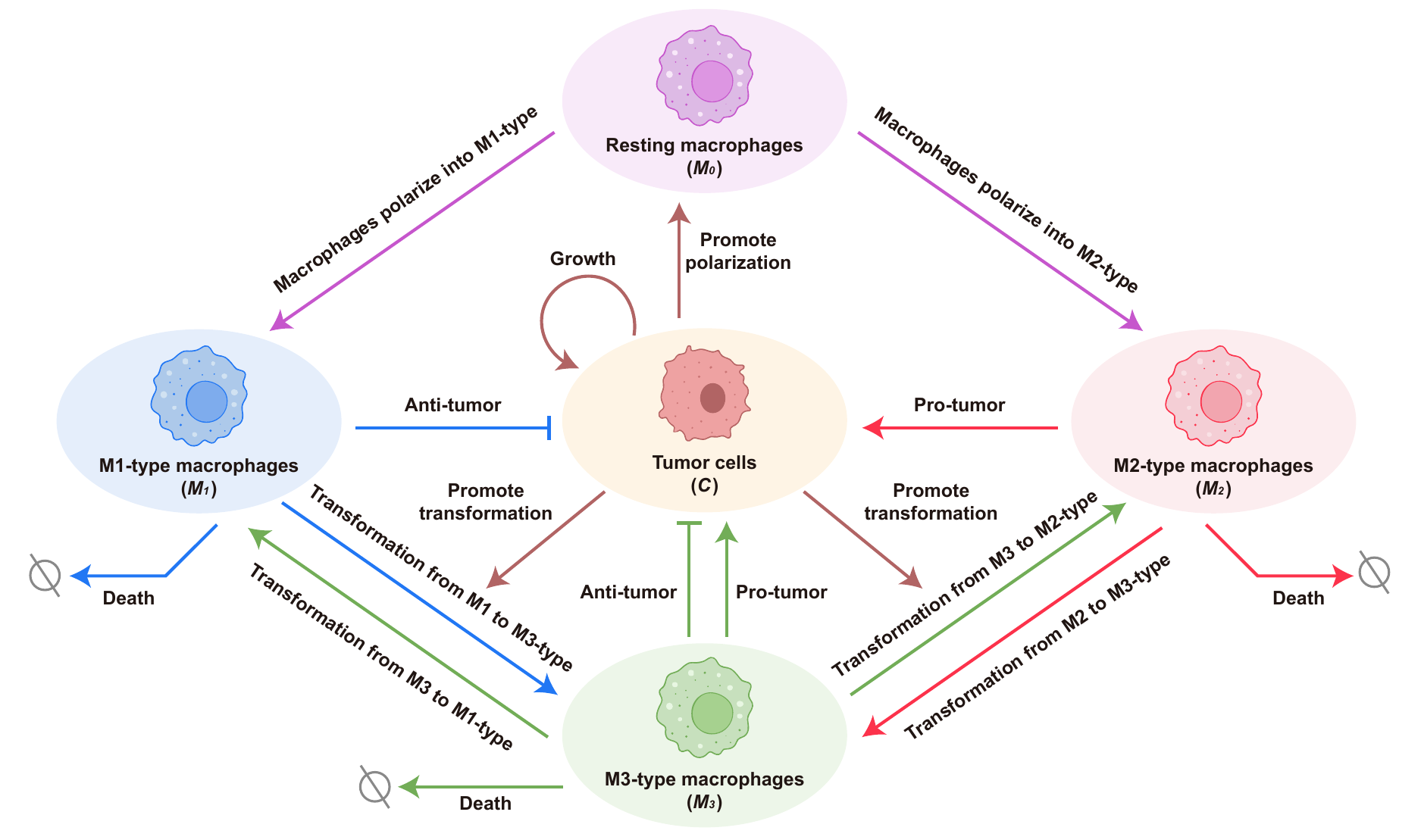}}
		\caption{\textbf{The dynamic regulatory network of the interaction between tumors and macrophages.} }
		\label{Fig1}
	\end{figure}
	
	The model is formulated as the following system of ordinary differential equations:
	\begin{equation}
		\label{Model}
		\left\{
		\begin{aligned}
			\frac{\mathrm{d} C}{\mathrm{d} t} & = \underbrace{\alpha \left \{ 1+\delta\left ( M_2+p M_3 \right )  \right \} \left ( 1 - \frac{C}{K} \right ) C}_{\text{Tumor growth promoted by M2/M3}} - \underbrace{\eta \left ( M_1 + (1-p)M_3 \right ) C,}_{\text{Killing tumor by M1/M3}}\\
			\frac{\mathrm{d} M_1}{\mathrm{d} t} & = \underbrace{\kappa M_0}_{\text{Polarization}} - \underbrace{\beta_{13}\frac{C}{K_1+C}M_1 + \beta_{31}M_3}_{\text{M1-to-M3 transformation}} - \underbrace{d_1 M_1}_{\text{Death}},\\
			\frac{\mathrm{d} M_2}{\mathrm{d} t} & = \underbrace{\gamma \left ( 1+\lambda \frac{C}{K_0+C} \right ) M_0}_{\text{Polarization}} + \underbrace{\beta_{32}\frac{C}{K_2+C}M_3- \beta_{23}M_2}_{\text{M3-to-M2 transformation}}  - \underbrace{d_2 M_2}_{\text{Death}}, \\
			\frac{\mathrm{d} M_3}{\mathrm{d} t} & =  \underbrace{\beta_{13}\frac{C}{K_1+C}M_1 - \beta_{31}M_3}_{\text{M1-to-M3 transformation}} - \underbrace{\beta_{32}\frac{C}{K_2+C}M_3 + \beta_{23}M_2}_{\text{M3-to-M2 transformation}} - \underbrace{d_3 M_3}_{\text{Death}}.
		\end{aligned}
		\right.
	\end{equation}
	Here, $C(t)$ represents the number of tumor cells at time $t$. The variables $M_1(t)$, $M_2(t)$ and $M_3(t)$ denote the number of pro-inflammatory M1-type, anti-inflammatory M2-type and intermediate M3-type macrophages at time $t$, respectively. This modeling framework facilitates a systematic analysis of how phenotypic transitions among macrophage subtypes collectively shape tumor growth dynamics and influence potential treatment outcomes.
	
	The model is formulated based on the following biological assumptions and parameter descriptions:
	\begin{itemize}
		\setlength{\itemsep}{2pt}
		\setlength{\parskip}{2pt} 
		\item Typically, tumor growth is increasingly limited by resource availability, such as oxygen and nutrients, as the tumor cell population expands, leading to a progressive decline in the proliferation rate as the cell population increases due to self-competition \citep{West.PNAS.2019}. To account for this saturation effect, we adopt the logistic growth law to characterize tumor growth dynamics \citep{West.PNAS.2019}. Here, the parameters $\alpha$ and $K$ represent the tumor's growth rate and carrying capacity, respectively.
		\item The anti-tumor activity of M1-type macrophages is mediated through direct mechanisms, including phagocytosis of tumor cells and the release of cytotoxic mediators \citep{Lawrence.NatRevImmunol.2011,Noy.Immunity.2014}. The parameter $\eta$ indicates the killing rate of tumor by M1-type macrophages. In contrast, M2-type macrophages promote tumor progression through multiple mechanisms, primarily by facilitating angiogenesis, secreting growth factors, and shaping an immunosuppressive microenvironment \citep{Lawrence.NatRevImmunol.2011,Noy.Immunity.2014}. The parameter $\delta$ indicates the tumor amplification coefficient mediated by M2-type macrophages. Following recent biological observation \citep{Dunsmore.SciImmunol.2024}, the M3-type macrophage population is subdivided based on functionality: a proportion $p$ exerts tumor-promoting effects, while the remaining proportion $1-p$ contributes to tumor suppression.
		\item According to the classical immunological paradigm, macrophages are viewed as terminally differentiated cells maintained by bone marrow-derived monocytes \citep{Lawrence.NatRevImmunol.2011,Guan.SignalTransductTargetTher.2025}. These monocytes enter tissues via circulation and differentiate into resting macrophages ($M_0$), which subsequently polarize into distinct functional subtypes in response to local microenvironmental signals \citep{Lawrence.NatRevImmunol.2011,Guan.SignalTransductTargetTher.2025}. In the model, $\kappa$ and $\gamma$ quantify the polarization rates toward M1-type and M2-type macrophages, respectively. Tumor cells are assumed to enhance the polarization of resting macrophages toward the M2-type. $\lambda$ denotes the polarization amplification coefficient of M2-type macrophages by tumor cells. The parameter $K_0$ represents the tumor half-saturation constant during the M2-type macrophages polarization process.
		\item Macrophages demonstrate considerable phenotypic plasticity \citep{Li.JImmunotherCancer.2021}. These phenotypic transitions are modeled by a set of rate parameters $\beta_{ij} ~ (i, j = 1 ~{\rm or}~ 2 ~{\rm or}~ 3)$, where $\beta_{ij}$ models the transformation rate from M$i$-type to M$j$-type macrophages. The tumor promotes macrophage polarization towards the M2-type. The parameter $K_1$ denotes the tumor half-saturation constant during the transition from M1-type to M3-type macrophages, respectively. Similarly, the transition from M3-type to M2-type macrophages depends on tumor cell density through the half-saturation constant $K_2$.
		\item In line with common mathematical modeling practice, cell death is assumed to follow first-order kinetics. For a cell population $M_i ~ (i=1, 2, 3)$, this is represented by a linear term $d_i M_i$, where $d_i$ denotes the corresponding death rate.
	\end{itemize}

	\section{Results}\label{sec:sec3}
	
	\subsection{Theoretical analysis} \label{sec:sec3_1}
	
	In this section, unless otherwise stated, all parameters of system \eqref{Model} are taken to be positive. Here, we examined the fundamental mathematical properties of system \eqref{Model}. These properties are necessary to ensure biological plausibility and to analyze long-term behavior. First, we obtained that all solutions of \eqref{Model} corresponding to positive initial conditions remain positive for all time.
	
	\begin{theorem}
		\label{thm2.1} \rm{\textbf{(Positivity of solutions)}}
		The solutions of system \eqref{Model} are positive for $t \geq 0$ when the initial conditions satisfy $C(0) > 0$, $M_1(0) > 0$, $M_2(0) > 0$ and $M_3(0) > 0$.
	\end{theorem}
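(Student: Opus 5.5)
The plan is a first-exit-time argument combined with explicit integrating-factor representations. The right-hand side of \eqref{Model} is locally Lipschitz on a neighbourhood of the closed nonnegative orthant, because the saturating terms $C/(K_i+C)$ are smooth for $C>-\min(K_0,K_1,K_2)$. Hence a unique solution exists on a maximal interval $[0,T_{\max})$. Suppose, for contradiction, that some component fails to stay positive, and let
\[
t^*=\sup\{\tau\in(0,T_{\max}) : C,M_1,M_2,M_3>0 \text{ on } [0,\tau]\}<T_{\max},
\]
so that all four components are positive on $[0,t^*)$ and at least one vanishes at $t^*$.

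\textbf{Step 1: the tumor equation.} Writing $\dot C = C\,\Phi(t)$ with
\[
\Phi(t)=\alpha\{1+\delta(M_2+pM_3)\}(1-C/K)-\eta(M_1+(1-p)M_3),
\]
a continuous function of $t$, we obtain $C(t)=C(0)\exp\bigl(\int_0^t\Phi(s)\,ds\bigr)>0$ for every $t<T_{\max}$. So $C$ can never vanish, and this holds regardless of the signs of the other variables.

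\textbf{Step 2: the macrophage equations.} On $[0,t^*]$ all variables are nonnegative, with $C>0$.
\begin{itemize}
\item For $M_1$, variation of constants gives
\[
M_1(t)=e^{-\int_0^t a_1}\Bigl(M_1(0)+\int_0^t e^{\int_0^s a_1}\,(\kappa M_0+\beta_{31}M_3)\,ds\Bigr)>0,
\]
where $a_1=\beta_{13}C/(K_1+C)+d_1$. This uses $\kappa M_0>0$ and $M_3\ge0$ on $[0,t^*]$.
\item For $M_2$, the analogous representation holds with $a_2=\beta_{23}+d_2$ and forcing $\gamma(1+\lambda C/(K_0+C))M_0+\beta_{32}CM_3/(K_2+C)>0$, so $M_2(t^*)>0$.
\item For $M_3$, set $a_3=\beta_{31}+\beta_{32}C/(K_2+C)+d_3$ with forcing $\beta_{13}CM_1/(K_1+C)+\beta_{23}M_2\ge0$. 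Then $M_3(t^*)\ge M_3(0)e^{-\int_0^{t^*}a_3}>0$.
\end{itemize}
Hence no component vanishes at $t^*$, which contradicts the definition of $t^*$. Therefore $t^*=T_{\max}$ and the solution is positive on its whole interval of existence.

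\textbf{Main obstacle.} The delicate point is the circularity of the coupling: the positivity of $M_1$ needs $M_3\ge0$, and the positivity of $M_3$ needs $M_1,M_2\ge0$. The exit-time argument resolves this, because at the first possible exit every forcing term is still nonnegative.

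\textbf{Global existence.} To get the statement for all $t\ge0$, rule out finite-time blow-up.
\begin{itemize}
\item Adding the macrophage equations, the transfer terms cancel, giving $\frac{d}{dt}(M_1+M_2+M_3)\le(\kappa+\gamma(1+\lambda))M_0-d\,(M_1+M_2+M_3)$ with $d=\min d_i$. So the macrophages are bounded on bounded time intervals.
\item Then $\dot C\le\alpha(1+\delta(M_2+pM_3))C$ shows that $C$ grows at most exponentially. Alternatively, $C\le\max(C(0),K)$ follows directly from the logistic factor.
\end{itemize}
Thus $T_{\max}=\infty$, and the solutions are positive for all $t\ge0$.
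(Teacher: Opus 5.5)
Your proof is correct, and its skeleton matches the paper's: the exponential formula for $C$, then a contradiction at the first time a macrophage component could reach zero. The difference is the mechanism used at that time.

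The paper argues pointwise at the first zero $t_0$. The vanishing component must have derivative $\le 0$ there, yet evaluating the right-hand side gives a strictly positive value. This forces a case split (its Cases 1--4) according to which of $M_1,M_2,M_3$ vanish simultaneously. You instead treat each macrophage equation on $[0,t^*]$ as a linear equation with nonnegative forcing and apply variation of constants. This gives $M_i(t^*)\ge M_i(0)e^{-\int_0^{t^*}a_i}>0$ for all three components at once, with no case analysis.

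Under the theorem's hypothesis that all parameters are positive, both arguments are valid, but yours is more robust. The paper's Case 4 needs $\beta_{13}C M_1/(K_1+C)+\beta_{23}M_2>0$ at $t_0$. This fails if $\beta_{13}=\beta_{23}=0$, which is admissible in the ranges used later ($\beta_{23}=0$ in the table and $\beta_{13}\in[0,1]$ in the ABC sampling). Your bound needs only nonnegative rates.

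You also supply local well-posedness and rule out finite-time blow-up. The paper's proof takes these for granted; global existence is only implicit once it is combined with Theorem \ref{thm2.1_b}.
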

	
Next, we establish that all solutions of system \eqref{Model} originating from positive initial conditions remain bounded.
	
	\begin{theorem}
		\label{thm2.1_b} \rm{\textbf{(Boundedness of solutions)}}
		The solutions of system \eqref{Model} that start positive remain bounded.
	\end{theorem}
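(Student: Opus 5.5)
The plan is to bound the macrophage block through its total and then use that bound to control the tumor equation. Theorem~\ref{thm2.1} gives $C,M_1,M_2,M_3>0$ for all $t\ge 0$, and I will use this throughout.

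\textbf{Step 1: bound the total macrophage population.} Set $M=M_1+M_2+M_3$. Adding the last three equations of \eqref{Model}, all transition terms cancel in pairs. Using $\frac{\lambda C}{K_0+C}\le\lambda$ and $d_{\min}=\min\{d_1,d_2,d_3\}$, this gives
\begin{equation*}
\frac{\mathrm{d}M}{\mathrm{d}t}=\kappa M_0+\gamma\Bigl(1+\lambda\frac{C}{K_0+C}\Bigr)M_0-d_1M_1-d_2M_2-d_3M_3\le \bigl(\kappa+\gamma(1+\lambda)\bigr)M_0-d_{\min}M .
\end{equation*}
By the comparison principle (Gronwall), $M(t)\le M(0)e^{-d_{\min}t}+\frac{(\kappa+\gamma(1+\lambda))M_0}{d_{\min}}\bigl(1-e^{-d_{\min}t}\bigr)$. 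Hence
\begin{equation*}
M(t)\le \bar M:=\max\Bigl\{M(0),\tfrac{(\kappa+\gamma(1+\lambda))M_0}{d_{\min}}\Bigr\}\quad\text{for all } t\ge 0 .
\end{equation*}
By positivity, each $M_i$ is then bounded by $\bar M$.

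\textbf{Step 2: bound the tumor.} The tumor is the only variable with nonlinear self-growth, so it is the one that needs care. Positivity lets me drop the killing term $-\eta(M_1+(1-p)M_3)C\le 0$. Using $M_2+pM_3\le \bar M$ (valid for $0\le p\le 1$) then gives
\begin{equation*}
\frac{\mathrm{d}C}{\mathrm{d}t}\le \alpha(1+\delta\bar M)\Bigl(1-\frac{C}{K}\Bigr)C .
\end{equation*}
The right-hand side is a logistic vector field with carrying capacity $K$. Comparison with the logistic solution yields $C(t)\le\max\{C(0),K\}$.

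A direct argument works as well. Whenever $C>K$ we have $(1-C/K)<0$ and the bracket $1+\delta(M_2+pM_3)$ is positive, so $\dot C<0$. Therefore $C$ cannot cross upward through $\max\{C(0),K\}$. This sign argument also avoids the factor $1+\delta\bar M$ entirely.

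\textbf{Main obstacle and conclusion.} The only delicate point is the coupling. The tumor equation feeds back into macrophage polarization through $C/(K_0+C)$, and the macrophages feed back into the tumor's growth factor. Both couplings are harmless, because the Michaelis--Menten terms are bounded by $1$ and the logistic factor changes sign at $C=K$ regardless of the macrophage level. Finally, I will note that the set $\{0<C\le\max(C(0),K),\ 0<M\le\bar M\}$ is positively invariant. Together with local existence and uniqueness (the right-hand side is locally Lipschitz), this shows that solutions exist globally and remain bounded, which proves Theorem~\ref{thm2.1_b}.
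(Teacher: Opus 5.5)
Your argument is correct and is essentially the paper's proof. Like the paper, you sum the three macrophage equations so the transition terms cancel, compare with $\dot M \le (\kappa+\gamma(1+\lambda))M_0 - d_{\min}M$, and control $C$ through the sign of $1-C/K$; the paper simply does the tumor step first.

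One caution: your displayed inequality $\dot C \le \alpha(1+\delta\bar M)(1-C/K)C$ is false when $C>K$. There $(1-C/K)C<0$, so replacing $1+\delta(M_2+pM_3)$ by the larger $1+\delta\bar M$ makes the right-hand side more negative. Rely instead on your direct sign argument, which is valid. It also gives the clean bound $C(t)\le\max\{C(0),K\}$, in place of the paper's loosely justified claim that $C(t)\le K$ after some $t_1$.
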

	
	The system \eqref{Model} exhibits two classes of biologically meaningful equilibria: tumor-free equilibria of the form $(C, M_1, M_2, M_3) = \left(0, M_{10}^*, M_{20}^*, M_{30}^* \right)$ and tumorous euqilibria of the form $(C, M_1, M_2, M_3) = (0, M_1^*, M_{2}^*, M_{3}^*)$, where $M_{i0}^* ~(i=1,2,3)$ are positive. We now establish the existence and uniqueness conditions for the tumor-free equilibrium.
	
	\begin{theorem}
		\label{thm2.1_c}\rm{\textbf{(Existence of tumor-free equilibrium)}}
		If $\beta_{23} \geqslant 0$, system \eqref{Model} has a unique tumor-free equilibrium $\left(0, M_{10}^*, M_{20}^*, M_{30}^* \right)$, where 
		\begin{equation}
			\label{tre0}
			\begin{aligned}
				M_{10}^* &= \dfrac{ \kappa M_0 (\beta_{23}+d_2) (\beta_{31}+d_3)  + \beta_{31} \gamma M_0  \beta_{23} }{ d_1 (\beta_{23}+d_2) (\beta_{31}+d_3) }, \\
				M_{20}^* &= \dfrac{ \gamma M_0}{\beta_{23}+d_2}, \\
				M_{30}^* &= \dfrac{\gamma M_0  \beta_{23} }{(\beta_{23}+d_2) (\beta_{31}+d_3)}.
			\end{aligned}
		\end{equation}
	\end{theorem}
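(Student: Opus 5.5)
The plan is to substitute $C=0$ into the right-hand side of system \eqref{Model} and observe that the equilibrium conditions then become a \emph{linear} system in $(M_1,M_2,M_3)$. This system is triangular, so both existence and uniqueness follow from solving it directly. First, the equation $\mathrm{d}C/\mathrm{d}t=0$ is satisfied identically at $C=0$, since every term on its right-hand side carries the factor $C$. Moreover, all saturating factors $C/(K_i+C)$, $i=0,1,2$, vanish at $C=0$. The remaining three equations therefore reduce to
\begin{equation*}
\begin{aligned}
0 &= \kappa M_0 + \beta_{31} M_3 - d_1 M_1,\\
0 &= \gamma M_0 - (\beta_{23}+d_2) M_2,\\
0 &= \beta_{23} M_2 - (\beta_{31}+d_3) M_3 .
\end{aligned}
\end{equation*}

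I would solve this in the order $M_2 \to M_3 \to M_1$. The second equation involves only $M_2$ and gives $M_{20}^*=\gamma M_0/(\beta_{23}+d_2)$, using $\beta_{23}+d_2\ge d_2>0$. Substituting into the third equation gives $M_{30}^*=\beta_{23}M_{20}^*/(\beta_{31}+d_3)$, which is the stated formula. Finally, the first equation gives $M_{10}^*=(\kappa M_0+\beta_{31}M_{30}^*)/d_1$. Putting this over the common denominator $d_1(\beta_{23}+d_2)(\beta_{31}+d_3)$ reproduces the expression in \eqref{tre0}.

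Uniqueness follows because the coefficient matrix of this linear system, written in the variable order $(M_2,M_3,M_1)$, is lower triangular. Its diagonal entries are $\beta_{23}+d_2$, $\beta_{31}+d_3$ and $d_1$, all strictly positive when $\beta_{23}\geqslant 0$ and the other parameters are positive. The determinant is therefore nonzero, and the solution is unique.

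For nonnegativity, each numerator is a sum of products of nonnegative parameters, so all three components are nonnegative. Further, $M_{10}^*>0$ and $M_{20}^*>0$ hold because $\kappa M_0>0$ and $\gamma M_0>0$. The only delicate point is the positivity of $M_{30}^*$. It is strictly positive exactly when $\beta_{23}>0$, which is the standing assumption of this section. In the boundary case $\beta_{23}=0$ admitted in the statement, one gets $M_{30}^*=0$, and the equilibrium still exists and is unique but lies on the boundary of the positive cone. I would state this caveat explicitly.

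None of these steps is a genuine obstacle. The main point requiring care is bookkeeping: checking that the hypothesis $\beta_{23}\geqslant 0$ keeps every denominator nonzero, and that the algebraic simplification of $M_{10}^*$ matches \eqref{tre0} exactly.
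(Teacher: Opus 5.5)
Your proposal is correct and follows the paper's proof essentially verbatim. You set $C=0$, solve the resulting linear system in the order $M_{20}^* \to M_{30}^* \to M_{10}^*$, and simplify $M_{10}^*$ over the common denominator $d_1(\beta_{23}+d_2)(\beta_{31}+d_3)$. Your explicit triangular-determinant argument for uniqueness and your remark that $\beta_{23}=0$ forces $M_{30}^*=0$ (so the $M_{i0}^*$ are not all strictly positive, as the paper's text asserts) are correct refinements that the paper leaves implicit.
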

	
	We now examine the asymptotic stability of the tumor-free equilibrium and establish conditions for the existence of a positive equilibrium of system \eqref{Model}. Our analysis identifies a critical threshold condition that separates two distinct dynamical regimes: asymptotic stability of the tumor-free equilibrium, corresponding to tumor eradication, or existence of a positive equilibrium, indicative of tumor persistence.
	
	\begin{theorem}
		\label{thm2.1_d}
		\rm{\textbf{(Local stability of tumor-free equilibrium )}}
		If 
		\begin{equation}
			\label{eq:ps_4aaa}
			\alpha < \eta \left(  M_{10}^* +(1-p) M_{30}^* \right)/ \left(1+ \delta ( M_{20}^* + p  M_{30}^*) \right),
		\end{equation}
		the tumor-free equilibrium $\left( 0, M_{10}^*, M_{20}^*, M_{30}^* \right)$ is locally asymptotically stable,  where $M_{10}^*$, $M_{20}^*$ and $M_{30}^*$ are given by \eqref{tre0}.
	\end{theorem}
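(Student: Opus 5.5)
The plan is to linearize system \eqref{Model} at the tumor-free equilibrium $E_0=\left(0, M_{10}^*, M_{20}^*, M_{30}^*\right)$ given by Theorem \ref{thm2.1_c}. By the Hartman--Grobman principle, local asymptotic stability follows once every eigenvalue of the Jacobian $J(E_0)$ has negative real part. The key structural observation is that the $C$-equation has the form $\dot C = C\,F(C,M_1,M_2,M_3)$. Hence every partial derivative of $\dot C$ with respect to $M_1,M_2,M_3$ carries a factor $C$ and vanishes at $C=0$. The first row of $J(E_0)$ is therefore $(a_{11},0,0,0)$ with
\[
a_{11} = \alpha\bigl(1+\delta(M_{20}^*+pM_{30}^*)\bigr) - \eta\bigl(M_{10}^*+(1-p)M_{30}^*\bigr).
\]
So $J(E_0)$ is block lower triangular, and its spectrum is $\{a_{11}\}$ together with the spectrum of the $3\times 3$ macrophage block. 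Condition \eqref{eq:ps_4aaa} is exactly the statement $a_{11}<0$, since $1+\delta(M_{20}^*+pM_{30}^*)>0$.

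Next I compute the macrophage block. The transition terms $\beta_{13}\frac{C}{K_1+C}M_1$ and $\beta_{32}\frac{C}{K_2+C}M_3$ vanish at $C=0$ together with their derivatives in $M_1$ and $M_3$. Their $C$-derivatives only populate the first column, which does not affect the eigenvalues. The block with respect to $(M_1,M_2,M_3)$ is then
\[
B=\begin{pmatrix} -d_1 & 0 & \beta_{31}\\ 0 & -(\beta_{23}+d_2) & 0\\ 0 & \beta_{23} & -(\beta_{31}+d_3)\end{pmatrix}.
\]
Reordering the variables as $(M_2,M_3,M_1)$ makes $B$ lower triangular. Its eigenvalues are therefore $-d_1$, $-(\beta_{23}+d_2)$ and $-(\beta_{31}+d_3)$, all negative because the parameters are positive.

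Combining the two steps, all four eigenvalues of $J(E_0)$ are real and negative under \eqref{eq:ps_4aaa}, which gives local asymptotic stability. There is no serious obstacle. The only point needing care is verifying that the $C$-row really decouples, i.e.\ that no term in $\dot C$ survives differentiation in $M_i$ at $C=0$, and correctly discarding the tumor-dependent polarization and transition terms in the macrophage block. I would state the Jacobian explicitly to make the triangular structure transparent.
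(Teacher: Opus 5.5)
Your proposal is correct and follows the paper's proof essentially step for step. The paper also writes down the Jacobian at the tumor-free equilibrium, with first row $(a_{11},0,0,0)$ and the same $3\times 3$ macrophage block, reads off the eigenvalues $a_{11}$, $-d_1$, $-(\beta_{23}+d_2)$, $-(\beta_{31}+d_3)$, and concludes because the threshold condition is equivalent to $a_{11}<0$. Your explicit justification of the triangular structure, via the factor $C$ in the tumor equation and the reordering $(M_2,M_3,M_1)$, is somewhat more careful than the paper, which simply states the eigenvalues.
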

	
	We now turn to the existence of positive equilibria.
	\begin{theorem}
		\label{thm2.1_e}
		\rm{\textbf{(Existence of positive equilibria)}} If 
		\begin{equation}
			\label{eq:ps_4aa0}
			\alpha > \eta \left(  M_{10}^* +(1-p) M_{30}^* \right)/ \left(1+ \delta ( M_{20}^* + p  M_{30}^*) \right),
		\end{equation}
		the system \eqref{Model} has at least one positive steady state $S^* = (C^*, M_1^*, M_2^*, M_3^*)$, where $M_{10}^*$, $M_{20}^*$ and $M_{30}^*$ are given by \eqref{tre0}. 
	\end{theorem}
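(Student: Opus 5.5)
We reduce the existence of a positive steady state to a scalar equation in $C$ on $(0,K)$ and apply the intermediate value theorem.

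\textbf{Step 1 (macrophage block).} Fix $C\ge 0$. The $M_1,M_2,M_3$ equations of \eqref{Model} at steady state form the linear system
\begin{equation*}
A(C)\,(M_1,M_2,M_3)^{T} = \bigl(\kappa M_0,\ \gamma(1+\lambda \tfrac{C}{K_0+C})M_0,\ 0\bigr)^{T},
\end{equation*}
with
\begin{equation*}
A(C)=\begin{pmatrix} a_1+d_1 & 0 & -\beta_{31}\\ 0 & \beta_{23}+d_2 & -b_2\\ -a_1 & -\beta_{23} & \beta_{31}+b_2+d_3\end{pmatrix},
\qquad a_1=\beta_{13}\tfrac{C}{K_1+C},\quad b_2=\beta_{32}\tfrac{C}{K_2+C}.
\end{equation*}
$A(C)$ is a nonsingular M-matrix: it has positive diagonal, nonpositive off-diagonal entries, and is strictly column diagonally dominant. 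Indeed, the column sums are $d_1$, $d_2$ and $d_3$, all positive. Hence $A(C)^{-1}\ge 0$. Since the right-hand side is positive, the unique solution $M_i(C)>0$ depends continuously, in fact rationally, on $C\in[0,K]$. At $C=0$ it reduces to $(M_{10}^*,M_{20}^*,M_{30}^*)$ from Theorem \ref{thm2.1_c}.

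\textbf{Step 2 (scalar reduction).} For $C>0$, the $C$-equation at steady state, divided by $C$, becomes $F(C)=0$, where
\begin{equation*}
F(C)=\alpha\bigl\{1+\delta(M_2(C)+pM_3(C))\bigr\}\Bigl(1-\frac{C}{K}\Bigr)-\eta\bigl(M_1(C)+(1-p)M_3(C)\bigr).
\end{equation*}
The function $F$ is continuous on $[0,K]$.

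\textbf{Step 3 (sign change).} At $C=0$, $F(0)=\alpha(1+\delta(M_{20}^*+pM_{30}^*))-\eta(M_{10}^*+(1-p)M_{30}^*)$, and this is positive exactly under condition \eqref{eq:ps_4aa0}. At $C=K$, $F(K)=-\eta(M_1(K)+(1-p)M_3(K))<0$, by the positivity from Step 1 and $p\le 1$; we assume $p\in[0,1]$ because it is a proportion. The intermediate value theorem then gives $C^*\in(0,K)$ with $F(C^*)=0$. Setting $M_i^*=M_i(C^*)>0$ yields a positive steady state $S^*$.

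\textbf{Main obstacle.} The only delicate point is Step 1: we must show that the linear macrophage system has a positive solution for every $C$, not just at $C=0$. The M-matrix argument handles this. A fallback is to solve explicitly, eliminating $M_1$ and $M_2$ in terms of $M_3$ from the first two equations. Substituting into the third gives $M_3$ times a denominator. Expanding that denominator, every term containing $a_1$ or $b_2$ cancels against a matching term, leaving a sum of products of death rates and $\beta$'s. That sum is manifestly positive, and the resulting $M_3$ is nonnegative.
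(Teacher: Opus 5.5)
Your proof is correct and follows the paper's own argument. The paper likewise solves the linear macrophage block for $M_i(C)$ (by Cramer's rule, $M_i^*=f_i(C^*)/f_4(C^*)$), forms the same scalar function, and applies the intermediate value theorem between $f(0)>0$ and $f(K)<0$. Your M-matrix observation (column sums $d_1,d_2,d_3$) is a cleaner route to positivity than the paper's explicit formulas, which as printed make every $f_i$ negative rather than positive and put an $f_1$ where $f_3$ belongs.

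One step needs tightening. The right-hand side $\bigl(\kappa M_0,\gamma(1+\lambda\frac{C}{K_0+C})M_0,0\bigr)^T$ is only nonnegative, so $A(C)^{-1}\ge 0$ gives only $M_i\ge 0$. Strict positivity should be read off the equations, e.g.\ $M_1\ge \kappa M_0/(a_1+d_1)>0$ and $M_3=(a_1M_1+\beta_{23}M_2)/(\beta_{31}+b_2+d_3)>0$.
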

	
	Together, Theorems \ref{thm2.1}-\ref{thm2.1_e} provide a mathematical framework for characterizing tumor-macrophage interactions. The detailed proofs of Theorems \ref{thm2.1}-\ref{thm2.1_e} are provided in \textbf{Appendix A}. We first prove the positivity and boundedness of the model solutions (Theorems \ref{thm2.1} and \ref{thm2.1_b}), ensuring that all model trajectories remain within biologically plausible ranges. Furthermore, Theorems \ref{thm2.1_d} and \ref{thm2.1_e} are of particular biological relevance: they define an explicit critical threshold for the tumor growth rate:
	\begin{equation}
		\alpha^* = \eta \left(  M_{10}^* +(1-p) M_{30}^* \right)/ \left(1+ \delta ( M_{20}^* + p  M_{30}^*) \right). 
	\end{equation}
	This threshold quantifies the maximum tumor growth rate at which the immune system, constituted by tumor-associated macrophages, can still achieve complete tumor eradication. If the tumor growth rate $\alpha$ falls below $\alpha^*$, the immune response is sufficiently strong to eliminate all tumor cells, resulting in a locally asymptotically stable tumor-free equilibrium. In contrast, if $\alpha$ exceeds $\alpha^*$, tumor growth persists despite immune control. To gain further insight into the regulatory mechanisms governing tumor-macrophage crosstalk, we now turn to numerical simulations.
	
	\subsection{Model validation and parameter sensitivity analysis} \label{sec:sec3_3}
	
	To validate the proposed dynamical model, we inferred key parameters using a Markov Chain Monte Carlo (MCMC) framework. The inference was calibrated against experimental time‑course measurements of tumor volume obtained from study \citep{Reda.NatCommun.2022} (data usage and parameter estimation are provided in \textbf{Appendix B}). Model performance was quantified by the coefficient of determination $R^2$. With the baseline parameter set, our simulations achieved $R^2 = 0.988$. As shown in Fig. \ref{Fig2}A, the simulated tumor growth dynamics closely align with the experimental data, confirming the model’s ability to recapitulate the observed progression.
	
	\begin{figure}[htbp]
		\centering
		\centerline{\includegraphics[width=0.9\textwidth]{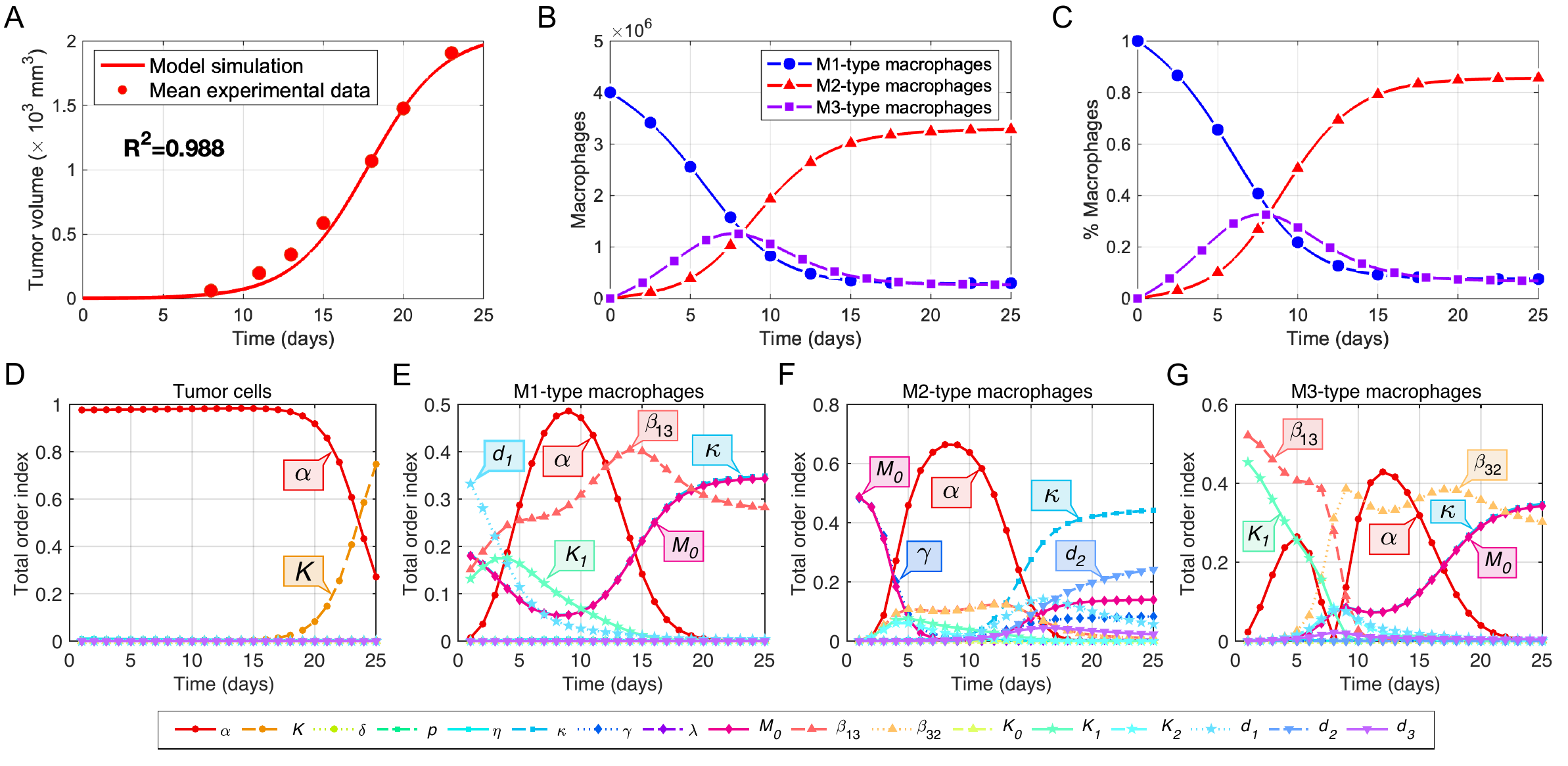}}
		\caption{\textbf{Model validation and global sensitivity analysis.} (A) Simulated tumor growth trajectories (solid line) closely match experimental data (dots), with $R^2 = 0.988$. (B) Temporal trajectories of M1‑type, M2‑type, and M3‑type macrophage subpopulations predicted by the calibrated model. (C) Temporal trajectories of the relative fractions of the three macrophage phenotypes. (D) - (G) Sobol' global sensitivity analysis quantifies the influence of individual parameters on tumor cell and macrophage phenotype dynamics.}
		\label{Fig2}
	\end{figure}
	
	The model further predicts the temporal dynamics of three major tumor-associated macrophages (TAMs) phenotypic states (Fig. \ref{Fig2}B). Simulations indicate an initial predominance of the anti-tumor M1-type macrophages, which gradually declines as the M3-type macrophage population increases, marking a transition toward M3-type macrophages. With continued tumor progression, M3-type macrophages further differentiate toward the pro-tumor M2 phenotype, leading to sustained accumulation of M2-type macrophages. The M3-type macrophages, partially sustained by the declining M1-type pool, peak transiently before diminishing, ultimately resulting in M2-type macrophages becoming the dominant subset in later stages. The early dominance of M1-type macrophages corresponds to initial anti-tumor immunity, whereas the subsequent rise of M3-type macrophages illustrates macrophage plasticity under evolving tumor signals. Finally, the late dominance of M2-type macrophages underscores their role in promoting immunosuppression, angiogenesis, and tissue remodeling in advanced disease. The simulated co-evolution of these subsets aligns with prior theoretical and experimental work \citep{Eftimie.JTheorBiol.2021, Dunsmore.SciImmunol.2024}, confirming that the model captures key features of tumor-macrophage crosstalk. Furthermore, temporal changes in
	the relative proportions of the three macrophage phenotypes mirror the trends in their absolute abundances (Fig. \ref{Fig2}C), reinforcing the internal consistency of the model dynamics.
	
	To identify the key drivers of macrophage polarization dynamics, we performed a parameter sensitivity analysis to evaluate how variations in input parameters influence model outputs. Specifically, we implemented the Sobol' method via Latin Hypercube Sampling (LHS) with a ±10\% parameter perturbation range to quantify the temporal evolution of parameter sensitivities. The total order index of Sobol' index is adopted as a measure of the overall influence of each input parameter. As illustrated in Fig. \ref{Fig2}D, the population of tumor cells is most sensitive to the tumor growth rate ($\alpha$) during early stages. As the disease progresses, the total order index of tumor carrying capacity ($K$) gradually increases and exceeds $\alpha$ on the approximately 24th day. This shift indicates that the dominant factor regulating cancer cell population dynamics transitions from the intrinsic tumor growth rate to the environmental carrying capacity as the tumor progresses.
	
	In the dynamics of M1-type macrophages, several parameters demonstrate notable influence over time, particularly the death rate ($d_1$), the transformation rate to M3-type macrophages ($\beta_{13}$), the tumor growth rate ($\alpha$), the polarization rate ($\kappa$), and the baseline level of resting macrophages ($M_0$) (Fig. \ref{Fig2}E). The sensitivity indices for $d_1$, $\kappa$, $M_0$ and $K_1$ progressively decline in the initial stage, while those for $\beta_{13}$ and $\alpha$ exhibit a non-monotonic trend of an initial increase followed by a decrease. Moreover, a resurgence in the sensitivity of M1-type macrophage polarization-related parameters ($\kappa$ and $M_0$) resumes from day 9 onward, reestablishing their role as principal determinants of M1-type macrophage dynamics in later phases. The temporal variations in parameter sensitivities suggest that the functional state of M1-type macrophages is dynamically regulated in a multi‑stage and multi‑factor manner. In the early phase, cell death ($d_1$) and initial polarization signals ($\kappa$, $M_0$) dominate population establishment. Subsequently, tumor growth ($\alpha$) and transition toward the M3‑type ($\beta_{13}$) emerge as competing regulatory factors within the microenvironment. The renewed rise in sensitivity of polarization-related parameters ($\kappa$, $M_0$) at later stages indicates that sustaining the anti‑tumor M1 phenotype during tumor progression may require persistent polarization signals to counteract phenotypic conversion pressures within the evolving microenvironment.
	
	Similarly, the dynamics of M2-type macrophages are shaped by key parameters including $M_0$, $\alpha$, $\kappa$, the death rate $d_2$ and the polarization rate $\gamma$ (Fig. \ref{Fig2}F). In the early phase, the influence of $\gamma$ and $M_0$ gradually declines, whereas the impact of the tumor growth rate $\alpha$ rises sharply and dominates M2 behavior. As the system progresses into later stages, $\kappa$ and $d_2$ emerge as the principal drivers of M2-type macrophage dynamics. This progression suggests that the dynamic regulation of M2-type macrophages occurs in distinct stages: it is primarily driven by tumor cells in the early phase, and gradually shifts to being governed by interactions among different macrophage subsets in later stages.
	
	Meanwhile, the sensitivity analysis for M3-type macrophages reveals a distinct temporal pattern (Fig. \ref{Fig2}G). The influence of M1-to-M3 transformation parameters ($\beta_{13}$ and $K_1$) gradually declines in the early phase. In contrast, the sensitivity to the M3-to-M2 transformation ($\beta_{32}$) and M1-type macrophages polarization parameters ($\kappa$ and $M_0$) increase over time, becoming the dominant factors governing M3-type macrophages dynamics in the later stage. Notably, the sensitivity of the M3 population to the tumor proliferation rate ($\alpha$) exhibits a bimodal pattern over time compared to M1 and M2-type macrophages. This indicates that the dynamics of M3-type macrophages are driven by a complex interplay of multidimensional transformation parameters and cross regulatory signals. The observed bimodal response to tumor growth may reflect a stage-dependent functional switch in M3-type macrophages during tumor progression, offering a new perspective for analyzing macrophage inter-subtype crosstalk and the heterogeneity of the tumor immune microenvironment.
	
	\subsection{Numerical analysis of the existence and stability of equilibria} \label{sec:sec3_2}
	
	In this section, to further elucidate the influence of tumor-associated macrophages on the long-term dynamics of the system described by Eq. \eqref{Model}, we conducted a numerical investigation into how the polarization rate of M1-type macrophages ($\kappa$) and baseline level of resting macrophages ($M_0$) affect the existence and stability of the system equilibria. Using the default parameter values provided in Table \ref{tab:table1}, we performed a quantitative bifurcation analysis and the results are presented in Fig. \ref{Fig_bif1} and \ref{Fig_bif2}. Fig. \ref{Fig_bif1}A demonstrates the dependence of the number of steady states on the parameters $\kappa$ and $M_0$. Distinct regions in the diagram correspond to different dynamical regimes: tumor-free equilibria (pink), bistability (sky-blue) and tumorous equilibria (wheat). Furthermore, Fig. \ref{Fig_bif1}B, C and \ref{Fig_bif2}A-H provide detailed bifurcation diagrams depicting the existence and stability of steady states with respect to specific parameter variations. In these plots, unstable and stable steady states are denoted by dashed and solid curves, respectively.
	
	\begin{figure}[htbp]
		\centering
		\centerline{\includegraphics[width=1.14\textwidth]{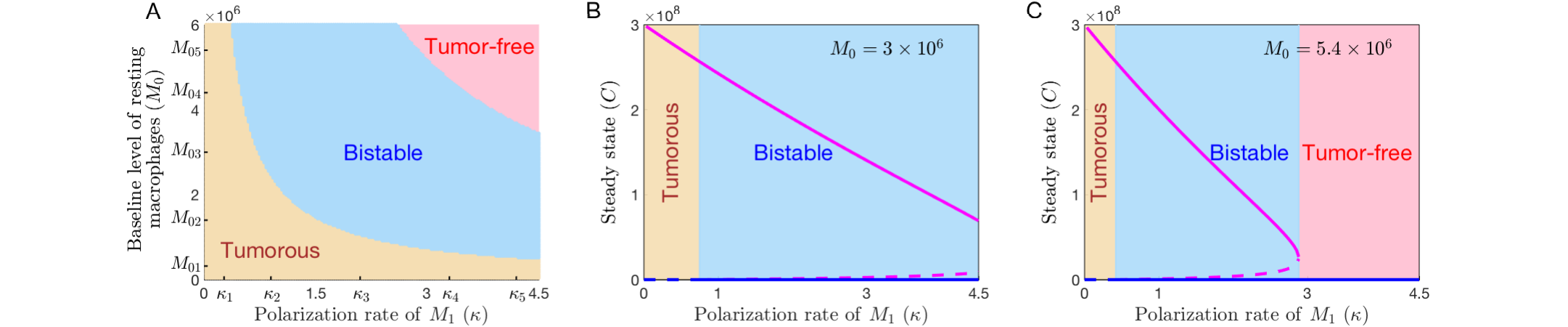}}
		\caption{\textbf{Bifurcation analysis of tumor-macrophage dynamics with respect to polarization rate of M1-type macrophages $\kappa$ and baseline level of resting macrophages $M_0$.} (A) Two‑parameter bifurcation diagram of tumor cell counts $(C)$  in the 
			($\kappa$, $M_0$) plane. Color‑coded regions correspond to distinct qualitative regimes: tumor-free equilibria (pink), bistability (sky-blue) and sustained tumor presence (wheat). (B)-(C) One‑parameter bifurcation diagrams of tumor cell counts as a function of $\kappa$ with different values of $M_0$ ($3 \times 10^6$ and $5.4 \times 10^6$ denoted $M_{03}$ and $M_{05}$ shown in (A), respectively). Dashed and solid curves represent unstable and stable steady states, respectively. Blue and magenta lines represent tumor-free equilibrium and positive steady states, respectively.}
		\label{Fig_bif1}
	\end{figure}
	
	As illustrated in Fig. \ref{Fig_bif1}A, the numerical simulations suggest that for sufficiently low levels of resting macrophages (e.g., $M_0 = M_{01}$), a unique positive steady state exists and remains asymptotically stable across all values of the polarization rate of M1-type macrophages ($\kappa$). This indicates that when the resting macrophage population is small, the polarization of these cells into the anti‑tumor M1-type macrophages is insufficient to counteract tumor growth. Consequently, immune escape prevails, and the system converges to a tumor‑dominant state irrespective of $\kappa$. As $M_0$ increases moderately (e.g., $M_0 = M_{02}$ or $M_{03}$), the number of positive equilibria depends on $\kappa$, yielding either a single tumorous steady state (wheat region) or two positive steady states (sky‑blue region). In the latter bistable scenario, the two stable attractors correspond to a tumor‑free equilibrium (representing a healthy state) and a high‑tumor equilibrium (representing disease progression), as shown in Fig. \ref{Fig_bif1}B. This bistability implies a critical transition: tumor control or progression becomes sensitive to initial conditions. For higher levels of resting macrophages $M_0$ ( $M_0 = M_{04}$ or $M_{05}$), the number of positive equilibria again varies with $\kappa$, ranging from zero to two. When two positive equilibria coexist, the state with higher tumor burden remains stable (Fig. \ref{Fig_bif1}C). This behavior may reflect the possibility that elevated resting macrophage counts promote polarization toward the pro‑tumor M2-type macrophages, thereby facilitating tumor persistence. However, when $\kappa$ is sufficiently large, only the tumor‑free equilibrium persists, indicating that a high polarization rate of M1-type macrophages is sufficiently large and can compensate for excessive resting macrophages and effectively suppress tumor growth. A similar dependence on $M_0$ is observed when $\kappa$ is held fixed at representative values $\kappa_i$ ($\kappa_i = 1, 2, ... ,5$), underscoring the coupled roles of resting macrophage abundance and polarization rate in determining the existence and stability of system equilibria. The corresponding results, which illustrate the shifts in equilibrium profiles as $M_0$ varies, are presented in detail in Fig. \ref{Fig_bif2}E-H.
	
	In addition, we conducted a series of numerical bifurcation analyses for discrete values of polarization rate $\kappa_i$ and the resting macrophage levels $M_{0i}$ (with $i=1,2,...,5$). The corresponding positive steady-state profiles are presented in Fig. \ref{Fig_bif2}. As demonstrated in Fig. \ref{Fig_bif2}A-D, considering a stable positive equilibrium and a fixed resting macrophage level ($M_0$), an increase in the M1-type macrophage polarization rate ($\kappa$) leads to a progressive decline in tumor cell numbers, while elevating the abundances of M1-type, M2-type and M3-type macrophages. Thus, for a fixed $M_0$, increasing $\kappa$ promotes polarization toward M1-type macrophages, expanding the M1 population. The expanded M1 pool subsequently transforms into M3-type macrophages, which in turn facilitates further conversion into the M2-type macrophages. Furthermore, Fig. \ref{Fig_bif2}E-H show that when the M1-type macrophage polarization rate ($\kappa$) is held, an increase in the resting macrophage level results in a gradual reduction in tumor cell count, together with elevated numbers of M1-type, M2-type and M3-type macrophages. This suggests that, at a fixed $\kappa$, rising resting macrophage level promotes polarization toward the M1 phenotype, boosting M1-type macrophage abundance. The resulting enlarged M1 population further transitions into M3-type macrophages, which subsequently enhances the conversion of M3-type into M2-type macrophages. Together, these findings elucidate a cascade of regulatory interactions among macrophage subtypes that orchestrate the dynamic crosstalk with tumor cells.
	
	\begin{figure}[htbp]
		\centering
		\centerline{\includegraphics[width=1.15\textwidth]{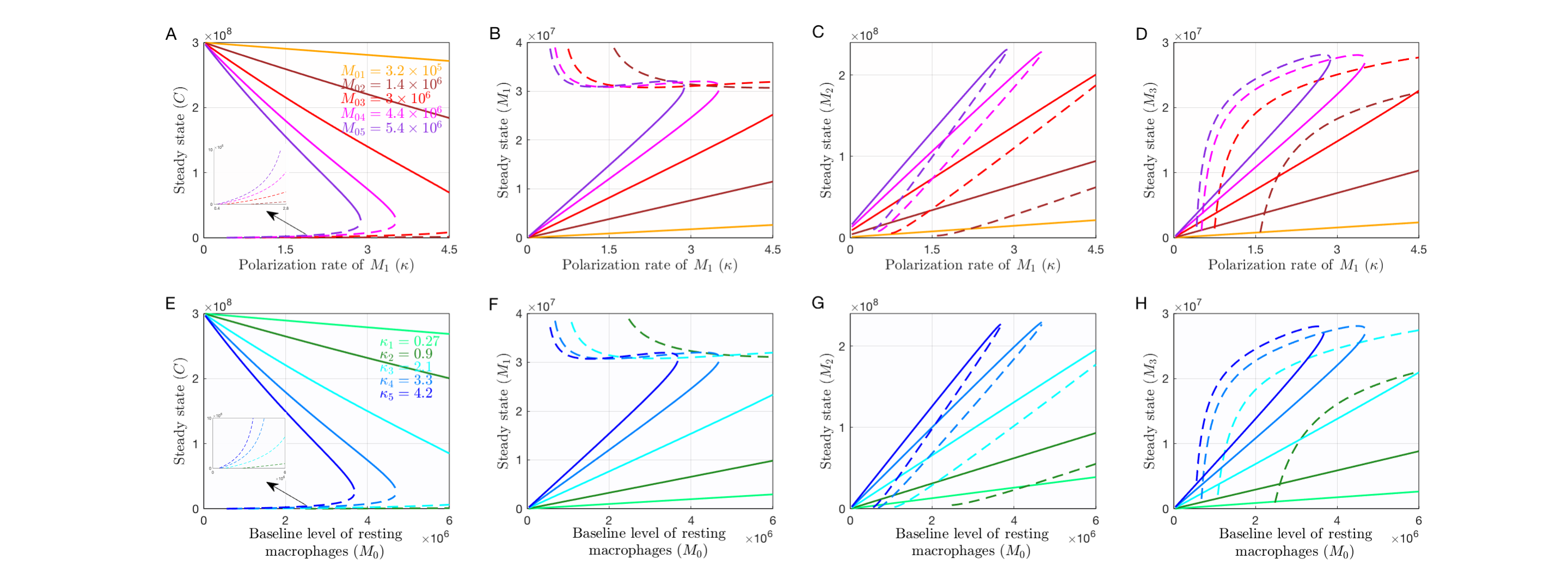}}
		\caption{\textbf{Bifurcation analysis with respect to polarization rate of M1-type macrophages $\kappa$ and baseline level of resting macrophages $M_0$.} (A)-(D) Bifurcation diagrams illustrating the dependence of the positive steady‑state counts of (A) tumor cells, (B) M1-type, (C) M2-type and (D) M3-type macrophages on $\kappa$ for five different values of $M_0$ ($M_{10}$, $M_{20}$, $M_{30}$, $M_{40}$ and $M_{50}$ in Fig. \ref{Fig_bif1}A), respectively. (E)-(H) Bifurcation diagrams showing the steady‑state counts of (E) tumor cells, (F) M1‑type, (G) M2‑type and (H) M3‑type macrophages as functions of $M_0$ for five fixed values of $\kappa$ ($\kappa_1$, $\kappa_2$, $\kappa_3$, $\kappa_4$ and $\kappa_5$ in Fig. \ref{Fig_bif1}A), respectively. Solid and dashed lines represent stable and unstable steady states, respectively.}
		\label{Fig_bif2}
	\end{figure}
	
	\subsection{Population dynamics analysis of tumor-macrophage interaction} \label{sec:sec3_4}
	
	To characterize individual-specific dynamical profiles, we employed the approximate Bayesian computation (ABC) method to generate \textit{in silico} parameter ensembles ($n$=100 per mouse) for a cohort of eight mice, informed by corresponding experimental time-series data. Model parameters were bounded within physiologically and mechanistically plausible intervals, as supported by established biological mechanisms and prior studies: $\alpha \in [0.38, 0.46]$, $K \in [2.17 \times 10^8, 5.00 \times 10^8]$, $M_0 \in [1 \times 10^5, 1 \times 10^6]$, $\kappa \in [0, 1]$, $\beta_{13} \in [0, 1]$, $\beta_{32} \in [0, 1]$, $\eta \in [1 \times 10^{-8}, 3 \times 10^{-8}]$ and $p \in [0, 1]$. The biological rationale underlying each parameter range is detailed in \textbf{Appendix B}. The goodness-of-fit was evaluated using the coefficient of determination ($R^2$) at the population level. As shown in Fig. \ref{Fig3}A, individual $R^2$ values for the eight subjects were 0.978, 0.968, 0.983, 0.986, 0.978, 0.963, 0.982 and 0.997, respectively. Furthermore, the distribution of $R^2$ across all virtual parameter ensembles (Fig. \ref{Fig3}B) demonstrates consistently high goodness-of-fit. Collectively, these results demonstrate that the proposed model achieves high-fidelity recapitulation of the experimental dynamics across individuals and at the ensemble level.
	
	\begin{figure}[htbp]
		\centering
		\centerline{\includegraphics[width=0.9\textwidth]{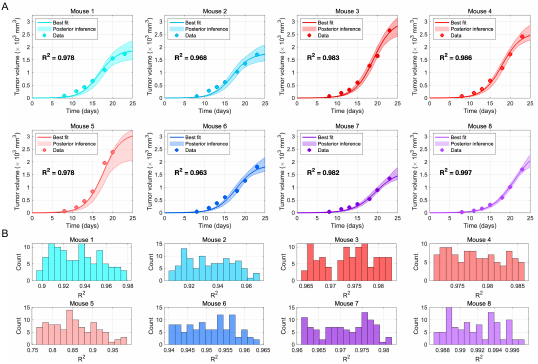}}
		\caption{\textbf{Population-level dynamics of tumor growth and macrophage polarization.} (A) Numerical simulation range of tumor cell counts over time, compared with established experimental time-course data \citep{Reda.NatCommun.2022}. (B) The probability density distributions of the coefficient of determination ($R^2$).}
		\label{Fig3}
	\end{figure}
	
	To further investigate inter-group dynamical profiles, we quantified tumor burden using the area under the curve (AUC), defined as 
	\begin{equation*}
		\mathrm{AUC}_{C} = \int_{t_0}^{t_f} C(t) \mathrm{d}t,
	\end{equation*}
	where $t_0=0$ day and $t_f=25$ day represent the start and end times, respectively. Based on the simulated area under the curve (AUC) distributions, the mice segregated into three distinct phenotypic clusters (Fig. \ref{Fig4}A): Cluster I, consisting of mice 3, 4 and 5; Cluster II, comprising mice 1, 2 and 6; and Cluster III, formed by mice 7 and 8. Data are presented as the mean AUC across 100 \textit{in silico} replicates per group, with error bars indicating the standard deviation.
	
	We further analyzed the macrophage dynamics across the eight \textit{in silico} mouse cohorts. The mean kinetic trajectories for each group ($n$ = 100 virtual mice per group) are shown in Fig. \ref{Fig4}B-D. Comparison among the three phenotypic clusters revealed that mice with low tumor burden (purple) exhibited significantly higher levels of M1-type macrophages than those with high tumor burden (red). The M3-type macrophage population displayed delayed activation kinetics, characterized by a later peak and a more sustained elevation over time. In contrast, M2-type macrophages remained a relatively low level during early phases but increased progressively in the later stage, a pattern attributable to the phenotypic conversion of the persistently elevated M3-type subset into the M2-type state.
	
	Additionally, the mean values and standard deviations of eight key model parameters across the three phenotypic clusters are summarized in Fig. \ref{Fig4}E. Analysis shows that the low tumor burden cluster has a lower tumor growth rate ($\alpha$). This suggests that reduced proliferative capacity corresponds to lower overall tumor load. In contrast, the high burden cluster displays a higher carrying capacity ($K$). This implies that a larger permissible tumor volume may enable further expansion. Meanwhile, tumor burden is inversely related to the baseline level of resting macrophages ($M_0$). It also shows an inverse relationship with the polarization rate ($\kappa$) toward the M1-type phenotype. Higher values of these parameters enhance tumor suppression by promoting the polarization of anti-tumor M1-type macrophages. Conversely, the transition rates $\beta_{13}$ (M1 to M3) and $\beta_{32}$ (M3 to M2) are positively correlated with tumor burden. A higher $\beta_{13}$ promotes the conversion of tumor-suppressive M1-type macrophages into the hybrid M3-type state. A higher $\beta_{32}$ favors the shift from M3-type to pro-tumorigenic M2-type macrophages. Both mechanisms contribute to increased tumor load. The tumor killing rate by M1-type macrophages ($\eta$) is notably higher in the low burden cluster. This reflects more efficient tumor cell elimination, which underpins the reduced tumor burden. Finally, the proportion of the M3-type subset that actively promotes tumor expansion ($p$) is lower in the low burden group. This indicates that the pro‑tumorigenic activity of M3‑type macrophages is limited in this phenotype.
	
	\begin{figure}[htbp]
		\centering
		\centerline{\includegraphics[width=0.9\textwidth]{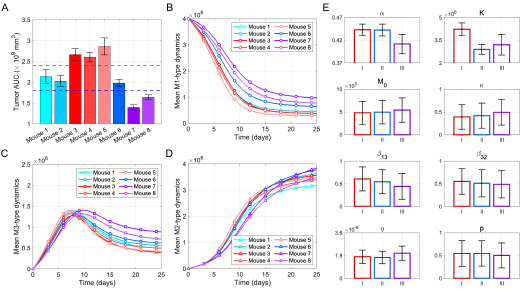}}
		\caption{\textbf{Analysis of inter-group tumor and macrophage dynamics.} (A) Simulated tumor burden, quantified as the area under the curve (AUC), for eight \textit{in silico} mouse cohorts. Data represent mean AUC per group (n = 100 virtual mice) with standard deviation. (B)-(D) Mean temporal trajectories of (B) M1‑, (C) M3‑ and (D) M2‑type macrophage populations across the three identified phenotypic clusters. (E) Mean values and standard deviations of eight key model parameters across the three phenotypic clusters.}
		\label{Fig4}
	\end{figure}
	
	\subsection{Influence of macrophage heterogeneity on tumor-macrophage interactions} \label{sec:sec3_5}
	
	To investigate the impact of macrophage phenotypic heterogeneity on tumor evolution dynamics, we defined two quantitative metrics for analyzing eight \textit{in silico} mouse cohorts. (1) M1-type macrophage burden, quantified as the area under the curve (AUC) of M1-type macrophage levels over the observation window:
	$$\mathrm{AUC}_{M_1} = \int_{t_0}^{t_f} M_1(t)\mathrm{d}t ,$$
	which reflects cumulative anti‑tumor activity. (2) Peak time of M3-type macrophages, defined as the time point at which the M3 population first reaches its maximum within the observed interval:
	$$t^{peak}_{M_3} = \min \left\{t: M_3(t)=\sup _{t_0 \leq \tau \leq t_f} M_3(\tau)\right\},$$
	which captures the delay in hybrid‑phenotype activation.
	
	At the population level, comparative analysis among the three phenotypic clusters (Fig. \ref{Fig5}A and D) showed that the low tumor burden group (purple) exhibited significantly higher late-phase M1-type macrophage burden and a markedly delayed M3-type macrophage peak time compared to the high burden group (red). To quantify these relationships, we computed the correlation between the group mean of each metric and the overall tumor burden (expressed as tumor AUC) across the eight subgroups. Fig. \ref{Fig5}B and E illustrate these associations, with Pearson correlation coefficients of -0.91 and -0.95, respectively, indicating strong negative correlations between tumor AUC and both M1 burden and M3 peak time.
	
	\begin{figure}[htbp]
		\centering
		\centerline{\includegraphics[width=0.9\textwidth]{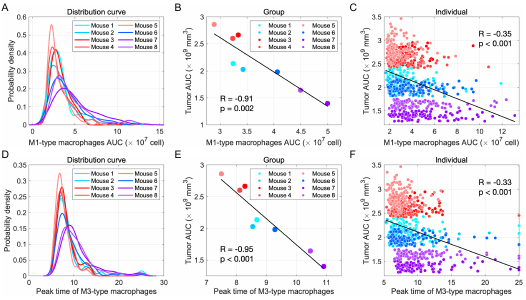}}
		\caption{\textbf{Influence of macrophage phenotypic heterogeneity on tumor-macrophage interactions}. (A) Probability density distribution of the cumulative M1-type macrophage burden, quantified as the area under the curve (AUC). (B)-(C) Correlation between overall tumor burden (AUC of tumor cells) and M1-type macrophage burden at the (B) group level (mean across phenotypic clusters) and (C) individual level (800 virtual mice). Pearson’s correlation coefficient is indicated. (D) Probability density distribution of the time to peak of the M3-type macrophage population. (E)-(F) Correlation between tumor burden and M3 peak time at the (E) group and (F) individual level.}
		\label{Fig5}
	\end{figure}
	
	To validate these trends at the individual level, scatter plots incorporating all 800 virtual mice (Fig. \ref{Fig5}C and F) were generated. Consistent with the population-level findings, individual-level analysis revealed a clear negative correlation between tumor AUC and each of the two macrophage kinetic metrics.
	
	\subsection{Survival analysis}
	
	To achieve a more precise assessment of tumor survival prognosis, we developed a risk stratification model centered around a threshold parameter $Q$. The optimal threshold was determined using the Particle Swarm Optimization (PSO) algorithm. The time of death for virtual sample $i$ is defined as:
	\begin{equation*}
		t_{\mathrm{death} }^{(i)} (V) = \min \{  t \in [t_0, t_f]: V^{(i)} >Q \},
	\end{equation*}
	where $t_0 = 0$ and $t_f = 30$ represent the start and end of the evaluation period, respectively. To account for stochastic mortality effects induced by tumor saturation, the stochastic death risk $V^{(i)}(t)$ at time $t$ for sample $i$ is defined as:
	\begin{equation}
		V^{(i)}(t) = \frac{C^{(i)}(t)}{K^{(i)}} + a \cdot \xi,
	\end{equation}
	where $C^{(i)}(t)$ is the tumor cell count at time $t$, $K^{(i)}$
	is the environmental carrying capacity, $\xi \sim \mathcal{N} (0,1)$ is standard Gaussian white noise introducing stochastic perturbation, and $a = 0.025$  is a noise scaling factor modulating the intensity of randomness. This formulation combines the deterministic saturation ratio $\frac{C^{(i)}(t)}{K^{(i)}}$ with a stochastic perturbation term, thereby integrating fluctuations in mortality risk due to factors such as microenvironmental heterogeneity.
	
	Based on the death time data \citep{Reda.NatCommun.2022}, we constructed a survival function to characterize population survival patterns. The survival rate at time $t$ is defined as the proportion of samples still alive:
	\begin{equation}
		S(t; Q) = \frac{1}{n} \sum_{i=1}^n \mathbb{I}\left(t_{\text{death}}^{(i)}(Q) > t\right) \times 100
	\end{equation}
	where $n$ is the size of the virtual cohort and $\mathbb{I}(\cdot)$ denotes the indicator function. The resulting simulated survival curve was compared with experimental data $D(t)$. The optimization objective was to identify the threshold $Q$ that minimizes the discrepancy between the simulated and experimental curves. We employed the Root Mean Square Error (RMSE) as the quantitative metric:
	\begin{equation}
		\text{RMSE}(Q) = \sqrt{\frac{1}{m} \sum_{j=1}^{m} \left( S(t_j; Q) - D(t_j) \right)^2},
	\end{equation}
	where $m$ is the number of time points in the experimental data. Thus, the problem can be reduced to a univariate optimization defined as:
	\begin{equation}
		Q^* = \arg \min_{Q} \text{RMSE}(Q).
	\end{equation}
	
	To globally optimize the threshold parameter $Q$, PSO was implemented with 20 particles initialized randomly within the plausible range [$Q_{\min}$, $Q_{\max}$] = [0.9, 1]. We assigned random initial velocities. In each iteration, the state of every particle is updated in two steps:
	(1) Velocity update:
	\begin{equation}
		v_i^{k+1} = \omega v_i^k + c_1 r_1 \left(P^{\text{best}}_i - Q_i^k\right) + c_2 r_2 \left(G^{\text{best}} - Q^k\right),
	\end{equation}
	where $Q^k_i$ and $v^k_i$
	denote the position and velocity of particle $i$ at iteration $k$, respectively. The inertia weight 
	$\omega = 0.7$  balances global exploration and local exploitation, the learning factors $c_1 = c_2 = 2$ scale the influence of individual and social experience, $r_1$ and $r_2$ are random numbers uniformly drawn from [0, 1], $P^{\text{best}}_i$ records the particle's personal best position, and 
	$G^{\text{best}}$ is the global best position found by the entire swarm. (2) Position update:
	\begin{equation}
		Q_i^{k+1} = Q_i^{k} + v_i^{k+1}.
	\end{equation}
	After updating the positions, the fitness (RMSE value) corresponding to each particle's $Q$ is evaluated, and the personal and global best records are updated accordingly. Given the relatively narrow prior search range, the algorithm was configured with a maximum of 50 iterations. The final output $G^{\text{best}}$ corresponds to the optimized threshold $Q^*$ that minimizes the RMSE.
	
	To quantitatively characterize tumor survival features, we analyzed the cohort of 800 \textit{in silico} mice described earlier. The survival dynamics of the simulated population closely align with experimental observations (Fig. \ref{Fig6}A). The convergence profile of the Particle Swarm Optimization (PSO) algorithm across iterations confirms that the selected number of iterations ensured stable parameter optimization (Fig. \ref{Fig6}B). Temporal trends in individual death risk stratification are presented in Fig. \ref{Fig6}C. The optimal threshold Q, determined by minimizing the root mean square error (RMSE), supports the biological and statistical plausibility of the model's risk-stratification threshold (Fig. \ref{Fig6}D).
	
	\begin{figure}[htbp]
		\centering
		\centerline{\includegraphics[width=0.9\textwidth]{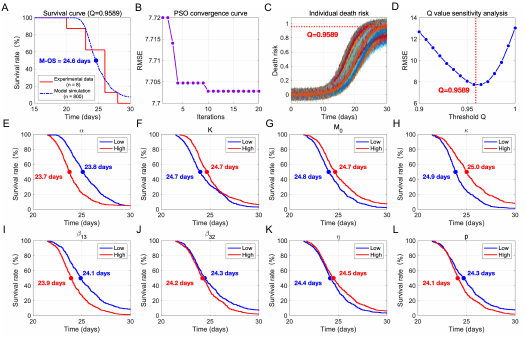}}
		\caption{\textbf{Survival analysis informed by Particle Swarm Optimization.} (A) Survival curve generated from \textit{in silico} simulations, compared against experimental survival data. (B) Convergence profile of the Particle Swarm Optimization (PSO) algorithm, measured by root mean square error (RMSE) over successive iterations. (C) Temporal evolution of the simulated individual death risk, incorporating deterministic saturation and stochastic noise. (D) Dependence of RMSE on the mortality-risk threshold parameter $Q$, with the minimum identifying the optimized threshold $Q^*$. (E)-(L) Comparison of mean survival rates between low‑ and high‑tumor‑burden groups, stratified by eight key model parameters: $\alpha$, $K$, $M_0$, $\kappa$, $\beta_{13}$, $\beta_{32}$, $\eta$ and $p$.}
		\label{Fig6}
	\end{figure}
	
	To evaluate survival outcomes, the cohort (n=800) was grouped by mean parameter values and analyzed with survival curves. The results are described below.
	\begin{itemize}
		\setlength{\itemsep}{2pt}
		\setlength{\parskip}{2pt} 
		\item Tumor growth rate ($\alpha$): An increased $\alpha$ drives rapid disease progression and predicts poorer clinical outcomes (Fig. \ref{Fig6}E).
		\item Tumor carrying capacity ($K$): A higher $K$ value indicates a better survival outcome (Fig. \ref{Fig6}F).
		\item Baseline level of resting macrophages ($M_0$) and polarization rate of  M1-type macrophages ($\kappa$): Higher levels of $M_0$ and $\kappa$ can further promote the polarization of M1-type macrophages with anti-tumor functions, thereby enhancing immune-mediated tumor surveillance and suppression (Fig. \ref{Fig6}G and H).
		\item Macrophage phenotypic transition rates ($\beta_{13}$ and $\beta_{32}$):  $\beta_{13}$ and $\beta_{32}$ accelerate the conversion of anti‑tumor M1-type  macrophages to M3 and then M2 phenotypes. Thereby, they reshape the tumor microenvironment toward immunosuppression, driving tumor progression and lowering survival (Fig. \ref{Fig6}I and J).
		\item Tumor killing rate by M1-type macrophages ($\eta$): A higher value of $\eta$ indicates that more tumor cells can be eliminated by M1‑type macrophages per unit time. This directly counteracts the proliferative advantage of tumor cells, thereby reducing the net tumor growth rate. Clinically, this is reflected as effective control of tumor burden and a favorable survival prognosis (Fig. \ref{Fig6}K).
		\item $M_3$ proportion that promotes tumor expansion ($p$): An increased proportion of pro‑tumor phenotype in M3‑type macrophages predicts poorer prognosis (Fig. \ref{Fig6}L).
	\end{itemize}
	
	Building upon the eight \textit{in silico} mouse cohorts generated in Section \ref{sec:sec3_4}, this study investigated the influence of macrophage phenotypic characteristics on host survival by analyzing two key dynamical metrics: (1) the M1 macrophage burden, quantified as the area under the curve (AUC) of M1-type macrophages over the observation window (0-25 days), and (2) the time to peak of the M3-type macrophage population. The temporal profiles of the mean survival rates for the eight cohorts are shown in Fig. \ref{Fig7}. The results indicate that the low tumor burden group (purple) exhibited significantly higher mean survival than the high burden group (red), consistent with reduced systemic damage under lower tumor loads.
	
	At the population level, we further analyzed the correlation between these metrics and the overall tumor burden (expressed as tumor AUC) across the three phenotypic clusters. As shown in Fig. \ref{Fig7}B and D, both M1-type macrophage burden and M3-type peak time showed strong negative correlations with tumor AUC (Pearson's $r=-0.84$ for each). This finding suggests that a higher M1-type macrophage burden may suppress tumor progression by enhancing anti-tumor immune responses, while a delayed M3 peak could attenuate the subsequent conversion into pro-tumorigenic M2-type macrophages, thereby impeding tumor growth. To validate this relationship at the individual level, scatter plots including all 800 virtual mice were generated (Fig. \ref{Fig7}C and E). Individual-level analysis confirmed a consistent negative trend between tumor AUC and each of the two macrophage kinetic metrics, reinforcing the population-based conclusions.
	
	\begin{figure}[htbp]
		\centering
		\centerline{\includegraphics[width=0.9\textwidth]{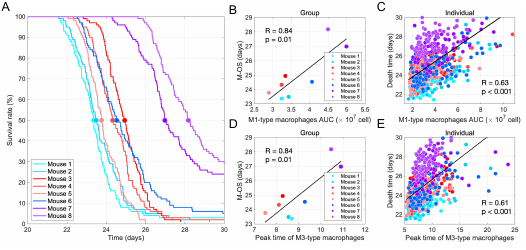}}
		\caption{\textbf{Influence of macrophage phenotypic heterogeneity on survival outcome.} (A) Time-evolution of the mean survival rate for each of the eight \textit{in silico} mouse cohorts. (B)-(C) Correlation between median overall survival and cumulative M1-type macrophage burden (area under the curve) at the (B) group level (mean across phenotypic clusters) and (C) individual level (all 800 virtual mice). (D)-(E) Correlation between median overall survival and the time to peak of the M3-type macrophage population at the (D) group and (E) individual level. Pearson correlation coefficients are provided for each association.}
		\label{Fig7}
	\end{figure}

	\section{Discussion} \label{sec:sec4}
	
	The ongoing interplay between macrophages and tumor cells generates a dynamic network that continuously shapes the evolutionary trajectory of cancer. To elucidate the underlying mechanisms of this interplay, we established a mechanistic mathematical model of tumor-macrophages interactions. Theoretical analysis indicates that the tumor-free equilibrium is asymptotically stable when the condition tumor growth rate $\alpha < \alpha^*$ is satisfied. Biologically, this inequality defines a critical boundary determining whether the tumor is controlled or progresses in the absence of clinical intervention. When the tumor growth rate falls below this threshold, the system effectively eliminates the tumor, leading to recovery; otherwise, immune escape occurs, resulting in a tumorous state. Model parameters were inferred using a Markov Chain Monte Carlo (MCMC) framework, achieving a high-fidelity fit to longitudinal tumor volume data ($R^2 > 0.96$). The model accurately recapitulates the established temporal dynamic characteristics of three TAM subsets (anti-tumor M1, pro-tumor M2 and hybrid M3 phenotypes), consistent with prior experimental and theoretical studies. To examine the influence of $M_1$‑type macrophage polarization and the baseline level of resting macrophages on tumor‑macrophage dynamics, we conducted a bifurcation analysis of the system described by system \eqref{Model} using numerical continuation methods. Notably, the analysis reveals bistability between tumor-free and tumor-dominant regimes. Clinically, this bistability provides a plausible mechanistic explanation for the heterogeneous progression patterns observed in cancer. Furthermore, we applied the Sobol' sensitivity method to quantify the influence of model parameters on macrophage polarization dynamics. Global sensitivity analysis reveals time-evolving regulatory logic: tumor control transitions from proliferation rate to environmental carrying capacity, while macrophage dynamics are governed by sequential dominance of polarization, transformation and cross-regulatory signals.
	
	Subsequent population-level analysis, informed by ABC, identified three distinct tumor burden clusters among eight mouse groups. These clusters exhibited significant differences in key parameters, including tumor growth rate, macrophage polarization efficiency and phenotypic transition rates. \textit{In silico} ensembles recapitulated individual and group-level tumor-macrophage dynamics with high fidelity. Low tumor burden clusters exhibited elevated M1 activity, delayed M3 activation, and specific parameter profiles linked to enhanced anti-tumor function. Furthermore, we quantified the impact of macrophage heterogeneity on survival prognosis. The M1-type macrophage burden and the time to peak of the M3 population both showed strong negative correlations with overall tumor load. We further identify two key prognostic metrics: cumulative M1-type macrophage activity and delayed M3-type macrophage peak, both strongly negatively correlated with tumor burden and positively associated with survival. Analysis of eight key parameters reveals that higher tumor burden is associated with poorer survival when driven by increased growth rate ($\alpha$), but paradoxically better survival in scenarios of high carrying capacity ($K$) or elevated $M_0/\kappa$, suggesting context-dependent immune modulation. Low burden favors survival when linked to higher $\beta_{13}/\beta_{32}$ or pro-tumorigenic M3 fraction ($p$), highlighting the complex, phenotype-specific role of macrophage dynamics in shaping survival outcomes. These results demonstrate that our model captures the critical, stage-specific crosstalk between tumor progression and macrophage phenotypic plasticity, providing a quantitative framework for understanding tumor-immune dynamics.
	
	The present model offers a valuable theoretical framework for analyzing tumor-macrophage interactions and holds potential relevance for therapeutic applications. However, it inherently possesses several limitations: (1) The model fails to account for spatial heterogeneity in the distribution of distinct cell populations. Spatial architecture plays a critical role in mediating cell-cell contact and migratory behaviors, and there has been growing interest in spatially explicit models of tumor-macrophage crosstalk \citep{Suveges.BullMathBiol2020,Suveges.FrontApplMathStat2022,Lai.PNAS.2018,Friedman.DCDSB.2019}. (2) Intratumoral heterogeneity is not considered, despite its potential to drive drug resistance and treatment failure. Recent years have witnessed increasing efforts to mathematically model such heterogeneity \citep{LiuApplMathModel2022,WangPLoSComputBiol2025,Cunningham2018}. (3) The model overlooks other key immune constituents of the tumor microenvironment. For instance, CD8+ T cells and regulatory T cells exert immunostimulatory and immunosuppressive influences, respectively, and play crucial roles in tumor progression. Mathematical models exploring interactions between tumors and various immune cells have thus attracted considerable attention \citep{Friedman.BullMathBiol.2020,Liao.MathBiosci.2023,Liao.MathBiosci.2024}. Future work should prioritize the development of spatially explicit models-such as those based on partial differential equations or agent-based models incorporate tissue-level spatial organization. Moreover, extending the model to incorporate phenotypic plasticity underlying resistance mechanisms, as well as a broader repertoire of immune effectors such as CD8+ T cells and regulatory T cells, will be essential for capturing the complexity of the tumor-immune system and generating more clinically actionable insights.
	
	\section*{Declaration of Competing Interest}
	The authors declare there is no conflict of interest.
	
	\section*{Acknowledgements}
	This work was supported by the National Key R\&D Program of China (Grant No. 2021YFA0719200) and the National Natural Science Foundation of China (12301617).
	
	\section*{Appendix A. Proofs of Theorems \ref{thm2.1}-\ref{thm2.1_e} }
	\label{app_A}
	
	\textbf{Proofs of Theorem \ref{thm2.1}}
	\begin{proof} Let $(C(t), M_1(t), M_2(t), M_3(t) )$ be a solution of system \eqref{Model} with initial condition $(C(0), M_1(0), M_2(0), M_3(0) )$. Assume $C(0)>0$, $M_1(0)>0$, $M_2(0)>0$ and $M_3(0)>0$.
		
		From the first equation of \eqref{Model}, we have
		\begin{equation*}
			C(t) = C(0) {\mathrm e}^{ \int_{ 0}^{t} g(s)  {\rm d} s},
		\end{equation*}
		where
		\begin{equation*}
			g(s) = \alpha \left(1+\delta (M_2(s) + p M_3(s) ) \right) \left( 1 - \frac{C(s) }{K}\right) - \eta \left( M_1(s) +(1-p) M_3(s)  \right).
		\end{equation*}
		Since the exponential function is always positive, it follows that $C(t) > 0$ for all $t \geq 0$. 
		
		We now prove that $M_1(t) >0$, $M_2(t) >0$ and $M_3(t) >0$ for all $t \geq 0$. To establish this statement, we argue by contradiction. Suppose that there exists a first time $t_0 >0$ such that 
		\begin{equation*}
			\min \{ M_1(t_0), M_2(t_0), M_3(t_0) \} =0.
		\end{equation*}
		We analyze all possible scenarios:
		
		Case 1. $M_1(t_0) =0$, with $M_1(t) >0$ for $t \in [0, t_0)$, $M_2(t_0) \geq 0$ and $M_3(t_0) \geq 0$. Then $\frac{\mathrm{d} M_1(t_0)}{\mathrm{d} t} \leq 0$. However, from the second equation of system \eqref{Model}, we have 
		\begin{equation*}
			\frac{\mathrm{d} M_1(t_0)}{\mathrm{d} t} = \kappa M_0 - \beta_{13}\frac{C(t_0)}{K_1+C(t_0)}M_1(t_0) + \beta_{31} M_3(t_0) - d_1 M_1(t_0) = \kappa M_0 + \beta_{31} M_3(t_0) >0,
		\end{equation*}
		which contradicts $\frac{\mathrm{d} M_1(t_0)}{\mathrm{d} t} \leq 0$.
		
		Case 2. $M_2(t_0) =0$, with $M_2(t) >0$ for $t \in [0, t_0)$, $M_1(t_0) \geq 0$ and $M_3(t_0) \geq 0$. A similar argument as in Case 1 leads to a contradiction.
		
		Case 3. $M_3(t_0) =0$, with $M_3(t) >0$ for $t \in [0, t_0)$, and $M_1(t_0) = 0, M_2(t_0) > 0$ or $M_1(t_0) > 0, M_2(t_0) = 0$ or $M_1(t_0) = M_2(t_0) = 0$. Without loss of generality, assuming that $M_1(t_0) =0, M_2(t_0) > 0$. Then $\frac{\mathrm{d} M_1(t_0)}{\mathrm{d} t} \leq 0$. On the basis of the second equation of system \eqref{Model}, we obtain
		\begin{equation*}
			\frac{\mathrm{d} M_1(t_0)}{\mathrm{d} t} = \kappa M_0 - \beta_{13}\frac{C(t_0)}{K_1+C(t_0)}M_1(t_0) + \beta_{31} M_3(t_0) - d_1 M_1(t_0) = \kappa M_0 >0,
		\end{equation*}
		which contradicts $\frac{\mathrm{d} M_1(t_0)}{\mathrm{d} t} \leq 0$.
		
		Case 4. $M_3(t_0) =0$, with $M_3(t) >0$ for $t \in [0, t_0)$, $M_1(t_0) > 0$ and $M_2(t_0) > 0$. Then $\frac{\mathrm{d} M_3(t_0)}{\mathrm{d} t} \leq 0$. From the fourth equation of system \eqref{Model}, we obtain 
		\begin{eqnarray*}
			\frac{\mathrm{d} M_3(t_0)}{\mathrm{d} t} & = & \beta_{13}\frac{C(t_0)}{K_1+C(t_0)}M_1(t_0)  - \beta_{31}M_3(t_0) - \beta_{32}\frac{C(t_0)}{K_2+C(t_0)}M_3(t_0) + \beta_{23}M_2(t_0) - d_3 M_3(t_0)  \\
			& =&  \beta_{13}\frac{C(t_0) }{K_1+C(t_0)}M_1(t_0)   +  \beta_{23}M_2(t_0)  \\
			& >& 0,
		\end{eqnarray*}
		which is again a contradiction.
		
		In all cases, the assumption that a component first becomes non-positive leads to a contradiction. Therefore, $M_1(t), M_2(t), M_3(t) > 0$ for all $t \geqslant 0$, and the proof is complete.
	\end{proof}	
	
	\textbf{Proofs of Theorem \ref{thm2.1_b}}
	\begin{proof}
		By Theorem \ref{thm2.1}, the solutions of system \eqref{Model} satisfy $C(t) > 0$, $M_1(t) > 0$, $M_2(t) > 0$ and $M_3(t) > 0$ for all $t \geqslant 0$. 
		
		We begin by proving that $C(t)$ remians bounded for $t \geqslant 0$. According to the first equation of system \eqref{Model}, we obtain
		\begin{equation*}
			\begin{aligned}
				\dfrac{{\rm d} C(t)}{{\rm d} t} \big |_{ \{C  = K, M_1, M_2, M_3 > 0 \} } = - \eta \left ( M_1 + (1-p)M_3 \right ) C   \leqslant 0,
			\end{aligned}
		\end{equation*}
		which implies there exists a $t_1 >0$ such that $C(t) \leqslant K $ holds for all $t \geqslant t_1$. Consequently, $C(t)$ is bounded for all $t \geqslant 0$.
		
		To bound the macrophage populations, let $d_M := \min \left \{ d_1 , d_2, d_3  \right \}$. From the second, third and fourth equations of system \eqref{Model}, we derive
		\begin{eqnarray*}
			\dfrac{{\rm d} ( M_1 + M_2 + M_3 ) }{{\rm d} t} & \leq &  \kappa M_0 + \gamma \left ( 1+\lambda \frac{C}{K_0+C} \right ) M_0 - d_{M_1} M_1 -  d_{M_2} M_2  -  d_{M_3} M_3 \\
			& \leq &   \kappa M_0 + \gamma \left ( 1+\lambda \right ) M_0 - d_M M_1 - d_M M_2 - d_M M_3 \\
			& =& ( \kappa + \gamma + \gamma \lambda ) M_0  - d_M ( M_1 + M_2 + M_3 ).
		\end{eqnarray*}
		This inequality implies that for all $t \geqslant 0$,
		\begin{eqnarray*}
			M_1(t) + M_2(t) + M_3(t) \leq \hat{M} := \max \left \{ M_1(0) + M_2(0) + M_3(0), \dfrac{ (\kappa + \gamma(1+\lambda)) M_0 }{ d_M}  \right \}
		\end{eqnarray*}
		and hence each component satisfies $ M_i(t) \leq \hat{M}$ for $i=1,2,3$.
		This completes the proof.
	\end{proof}
	
	\textbf{Proofs of Theorem \ref{thm2.1_c}}
	
	\begin{proof}
		The tumor-free equilibria of system \eqref{Model} are determined by solving the following algebraic system:
		\begin{equation}
			\label{tre}
			\left\{
			\begin{aligned}
				\kappa M_0 + \beta_{31}M_{30}^* - d_1 M_{10}^* &=& 0, \\
				\gamma M_0 - \beta_{23}M_{20}^* - d_2 M_{20}^* &=& 0, \\
				- \beta_{31}M_{30}^*  + \beta_{23}M_{20}^* - d_3 M_{30}^* &=& 0.
			\end{aligned}
			\right.
		\end{equation}
		
		From the second equation of \eqref{tre}, we obtain
		\begin{equation}
			\label{tre_1}
			M_{20}^* = \dfrac{ \gamma M_0}{\beta_{23}+d_2}.
		\end{equation}
		Substituting \eqref{tre_1} into the third equation of \eqref{tre}, we have
		\begin{equation}
			\label{tre_2}
			M_{30}^* = \dfrac{ \beta_{23} }{\beta_{31}+d_3} M_{20}^* =  \dfrac{\gamma M_0  \beta_{23} }{(\beta_{23}+d_2) (\beta_{31}+d_3)} .
		\end{equation}
		Finally, substituting the equation \eqref{tre_2} into the first equation of \eqref{tre} gives
		\begin{equation}
			\label{tre_3}
			M_{10}^* = \dfrac{ \kappa M_0 + \beta_{31}M_{30}^* }{ d_1} = \dfrac{ \kappa M_0 (\beta_{23}+d_2) (\beta_{31}+d_3)  + \beta_{31} \gamma M_0  \beta_{23} }{ d_1 (\beta_{23}+d_2) (\beta_{31}+d_3) } .
		\end{equation}
		This completes the proof.
	\end{proof}
	
	\textbf{Proofs of Theorem \ref{thm2.1_d}}
	\begin{proof}
		The Jacobian matrix at the tumor-free equilibrium is given by
		\begin{equation}
			\label{eq:j2}
			J= \left( 
			\begin{array}{cccc}
				\alpha \left(1+ \delta ( M_{20}^* + p  M_{30}^*) \right) - \eta \left(  M_{10}^* +(1-p) M_{30}^* \right)  & 0   & 0  & 0      \\
				-\dfrac{\beta_{13}M_{10}^* }{ K_1 }  & -d_1 & 0  & \beta_{31}       \\
				\dfrac{ \gamma \lambda M_0 }{ K_0 } + \dfrac{\beta_{32}M_{30}^* }{ K_2 } & 0 & -\beta_{23} -d_2 & 0       \\
				\dfrac{\beta_{13}M_{10}^* }{ K_1 } - \dfrac{\beta_{32}M_{30}^* }{ K_2 } & 0 & \beta_{23} & -\beta_{31} -d_3          
			\end{array}
			\right),
		\end{equation}
		Therefore, all eigenvalues of the matrix $J$ are $ \alpha \left(1+ \delta ( M_{20}^* + p  M_{30}^*) \right) - \eta \left(  M_{10}^* +(1-p) M_{30}^* \right) $, $-d_1 $, $-\beta_{23} -d_2 $ and $-\beta_{31} -d_3 $. Under condition \eqref{eq:ps_4aaa}, all the eigenvalues of the Jacobian matrix $J$ are all negative numbers. Thereby, the tumor-free equilibrium $\left( 0, M_{10}^*, M_{20}^*, M_{30}^* \right)$ is locally asymptotically stable. This completes the proof.
	\end{proof}

	\textbf{Proofs of Theorem \ref{thm2.1_e}}
	\begin{proof}
		The positive steady states of system \eqref{Model} satisfy following algebraic equation:
		\begin{equation}
			\label{pss}
			\left\{
			\begin{aligned}
				\alpha \left \{ 1+\delta\left ( M_2^*+p M_3^* \right )  \right \} \left ( 1 - \frac{C^*}{K} \right )  - \eta \left ( M_1^* + (1-p)M_3^* \right ) & = 0,\\
				\kappa M_0 - \beta_{13}\frac{C^*}{K_1+C^*}M_1^* + \beta_{31}M_3^* - d_1 M_1^* & = 0,\\
				\gamma \left( 1+\lambda \frac{C^*}{K_0+C^*} \right ) M_0 + \beta_{32}\frac{C^*}{K_2+C^*}M_3^* - \beta_{23}M_2^* - d_2 M_2^* & =0, \\
				\beta_{13}\frac{C^*}{K_1+C^*}M_1^* - \beta_{31}M_3^* - \beta_{32}\frac{C^*}{K_2+C^*}M_3^* + \beta_{23}M_2^*- d_3 M_3^* & = 0.
			\end{aligned}
			\right.
		\end{equation}
		Using Cramer's rule on the last three equations of \eqref{pss}, we obtain
		\begin{equation}
			\label{pss_1}
			M_1^* = \dfrac{f_1(C^*) }{f_4(C^*)}, ~~~~ M_2^* = \dfrac{f_2(C^*) }{f_4(C^*)}, ~~~~ M_3^* = \dfrac{f_3(C^*) }{f_4(C^*)},
		\end{equation}
		where
		\begin{equation}
			\label{pss_2}
			\begin{aligned}
				f_1(C^*)  = -\kappa M_0 \left( d_2 \left( \beta_{31} + \beta_{32} \frac{C^*}{K_2+C^*} +d_3 \right) + \beta_{23} (\beta_{31} + d_3)  \right) -  \beta_{31}  \beta_{23} \gamma M_0 \left( 1+\lambda \frac{C^*}{K_0+C^*} \right),
			\end{aligned}
		\end{equation}
		
		\begin{equation}
			\label{pss_3}
			\begin{aligned}
				f_2(C^*)  = & - \gamma M_0 \left( 1+\lambda \frac{C^*}{K_0+C^*} \right) \left( \beta_{13}\frac{C^*}{K_1+C^*} \left( \beta_{32} \frac{C^*}{K_2+C^*} +d_3 \right) + d_1 \left(  \beta_{31} + \beta_{32} \frac{C^*}{K_2+C^*} +d_3 \right) \right) \\
				& - \beta_{13}\frac{C^*}{K_1+C^*} \kappa M_0 \beta_{32} \frac{C^*}{K_2+C^*} ,
			\end{aligned}
		\end{equation}
		
		\begin{equation}
			\label{pss_4}
			\begin{aligned}
				f_3(C^*)  =  - \left( \beta_{13}\frac{C^*}{K_1+C^*}+d_1 \right) \beta_{23} \gamma M_0 \left( 1+\lambda \frac{C^*}{K_0+C^*} \right) - \beta_{13} \frac{C^*}{K_1+C^*} \kappa M_0 (\beta_{23} +d_2) 
			\end{aligned}
		\end{equation}
		and 
		\begin{equation}
			\label{pss_5}
			\begin{aligned}
				f_4(C^*)  =  - \left( \beta_{13}\frac{C^*}{K_1+C^*}+d_1 \right) \left(  d_2 \left(  \beta_{32} \frac{C^*}{K_2+C^*} +d_3  \right) + \beta_{23} d_3 \right) - d_1 \beta_{31} (\beta_{23} + d_2).
			\end{aligned}
		\end{equation}
		
		Set 
		\begin{equation}
			\label{eq:ps_4aa}
			\begin{aligned}
				f(C^*) & =  \alpha \left \{ 1+\delta\left ( M_2^*+p M_3^* \right )  \right \} \left ( 1 - \frac{C^*}{K} \right )  - \eta \left ( M_1^* + (1-p)M_3^* \right ) \\
				& =  \alpha \left \{ 1+\delta\left ( \dfrac{f_2(C^*) }{f_4(C^*)}+p \dfrac{f_3(C^*) }{f_4(C^*)} \right )  \right \} \left ( 1 - \frac{C^*}{K} \right )  - \eta \left ( \dfrac{f_1(C^*) }{f_4(C^*)} + (1-p)\dfrac{f_1(C^*) }{f_4(C^*)} \right ). 
			\end{aligned}
		\end{equation} 
		From \eqref{pss_2} - \eqref{pss_5}, for $\beta_{23} \geqslant 0$, the functions $f_i (C^*)>0 (i=1, 2, 3, 4) $ on the interval $[0, K]$, ensuring $M_i^*>0 (i=1, 2, 3)$ on the same interval. In particular, 
		\begin{equation}
			f(K) = - \eta \left ( M_1^* + (1-p)M_3^*\right) <0. 
		\end{equation}
		In addition, under condition \eqref{eq:ps_4aa0},
		\begin{equation}
			\label{eq:ps_4aa1}
			\begin{aligned}
				f(0) = \alpha \left(1+ \delta ( M_{20}^* + p  M_{30}^*) \right) - \eta \left(  M_{10}^* +(1-p) M_{30}^* \right) > 0. 
			\end{aligned}
		\end{equation} 
		Thereby, by the Intermediate Value Theorem, there exists at least one root of $f(C^*)=0$ in $(0, K)$, which guarantees the
		existence of a positive steady state for system \eqref{Model}. This concludes the proof.
	\end{proof}

	\section*{Appendix B. Parameter estimation}
	\label{app_B}
	
	In this study, all parameter values and initial conditions used are summarized in Table \ref{tab:table1}. Specific values and their sources are as follows:
	
	\textbf{Initial values.} Model calibration was performed using experimental data from \citep{Reda.NatCommun.2022}. The initial tumor cell count was set to $C(0) = 2\times10^5$ cells, corresponding to the number of tumor cells injected in murine experiments \citep{Reda.NatCommun.2022}. To convert simulated cell counts into tumor volume, we used the relation
	\begin{equation*}
		V=\frac{v_{\text {cell}} \times C}{v_{\text {fraction}}},
	\end{equation*} 
	where $v_{\text{cell}} = 2.572\times10^{-9}$ ${\rm mm^3}$/cell represents the volume of a single tumor cell and $v_{\text{fraction}} = 0.37$ denotes the volume fraction of tumor cells within the total tissue volume \citep{Li.NPJSystBiolAppl.2025}. Consequently, the tumor cell count can be converted to volume by multiplying by a factor of $6.95\times10^{-6}$ mm$^3$/cell. 
	
	Based on prior modeling studies \citep{Shu.ApplMathModel.2020, Li.BullMathBiol.2024, Zhang.TheoryBiosci.2025}, the initial number or steady-state level of macrophages under physiological conditions typically ranges from $4\times10^5$ to $1\times10^8$ cells. We therefore set that the initial population of M1-type macrophages is $M_1(0) = 4 \times 10^6$ cells. In the absence of tumor cell inoculation, the immune system is predominantly composed of M1-type macrophages. Hence, we set the initial populations of both M2- and M3-type macrophages are zero, namely, $M_2(0) = M_3(0) = 0$ cells.
	
	\textbf{Tumor kinetics parameters.} Following \citep{Pillis.JTheorBiol.2006, Khalili.JTheorBiol.2023}, the tumor cell growth rate was set to $\alpha = 0.431$ day$^{-1}$. This value lies within the empirically supported range of $0.38$ to $0.46$ day$^{-1}$ reported in multiple modeling studies \citep{Xue.JTheorBiol.2023, Friedman.BullMathBiol.2018, Das.MathComputSimulat.2022,Rhodes.JTheorBiol.2019}. Tumor carrying capacity can range from $2.17 \times 10^8$ to $5 \times 10^8$ cells \citep{Guo.DCDSB.2023, Mahasa.JTheorBiol.2016, Rhodes.JTheorBiol.2019,Pillis.JTheorBiol.2006}. Thus, the tumor carrying capacity $K$ was taken as $3 \times 10^8$ cells as an intermediate value within the typical interval of $2.17 \times 10^8$ to $5 \times 10^8$ cells. 
	
	In accordance with \citep{Eftimie.JTheorBiol.2017}, where the macrophage carrying capacity is given as $1\times10^9$ cells, the tumor amplification coefficient mediated by M2-type macrophages was set to its reciprocal, $1\times 10^{-9}$ cells$^{-1}$. For the M3-type macrophages, we assumed an equal functional split between pro- and anti-tumor phenotypes, corresponding to $p=0.5$ (where $p$ can theoretically range from 0 to 1).
	
	Based on studies \citep{Shariatpanahi.JTheorBiol.2018,Kreger.CancerImmunolRes.2023}, the tumor killing rate by cytotoxic T cells in the adaptive immune system is $1.1\times 10^{-7}$ cells$^{-1}$day$^{-1}$. Given that macrophages exhibit lower cytotoxic activity than cytotoxic T cells, the macrophage killing rate $\eta$ is set to one-tenth of this value, yielding $\eta = \frac{1}{10} \times 1.1\times 10^{-7} = 1.1\times 10^{-8}$ cells$^{-1}$day$^{-1}$. This value can vary within the range of $1\times10^{-8}$ to $3\times10^{-8}$ cells$^{-1}$day$^{-1}$, reflecting the functional heterogeneity observed in macrophage-mediated tumor killing.
	
	\textbf{Polarization kinetics parameters.} Consistent with prior modeling study \citep{Zhang.TheoryBiosci.2025}, the baseline level of resting macrophages is set to $4 \times 10^5$ cells (i.e., $M_0 = 4 \times 10^5$ cells, with a physiologically plausible range typically between $1\times10^5$ and $1\times10^6$ cells). Reported polarization rate of M1-type macrophages typically range from $0.300$ to $0.500$ day$^{-1}$ \citep{Liao.MathBiosci.2023,Guo.DCDSB.2023}, while the polarization rate of M2-type macrophages vary between $2\times 10^{-2}$ and $1.125$ day$^{-1}$ \citep{Liao.MathBiosci.2023,Eftimie.JTheorBiol.2017}. Accordingly, we selected baseline values of $\kappa = 0.4$ day$^{-1}$ for M1-type macrophages polarization and $\gamma = 0.1$ day$^{-1}$ for M2-type macrophages polarization. To maintain model simplicity while acknowledging that M1 polarization is influenced by diverse microenvironmental factors, we constrained $\kappa$ to a range of $0$ to $1$ day$^{-1}$ in our simulations, avoiding the introduction of more complex nonlinear dynamics.
	
	Based on the polarization characteristics of M2-type macrophages in tumor-immune interactions and their response mechanisms to the tumor microenvironment, we set that the tumor half-saturation constant $K_0$ for M2 polarization to one-tenth of the maximum tumor carrying capacity, i.e., $K_0 = \frac{1}{10}K = 3\times10^7$ cells. Tumor cells are modeled to enhance M2-type macrophages polarization rate by up to 50\%. When $C \gg K_0$, the term $\gamma \left( 1+\lambda \frac{C}{K_0+C} \right)$ approaches $\gamma \left( 1+ \lambda \right)$, leading to the estimate $\lambda = 0.5$ day$^{-1}$.
	
	\textbf{Transformation kinetics parameters.} Based on the observed phenotypic shift in macrophage dominance from the M1 in early stages to the M2 phenotype in later stages of tumor progression \citep{Chen.ClinCancerRes.2011}, we set the half-saturation constants for phenotypic transitions as $K_1 = K_2 = \frac{1}{10}K_0 = 3 \times 10^{6}$ cells. In line with previous mathematical models \citep{Eftimie.JTheorBiol.2021}, macrophage re-polarization rates are set to $\beta_{31} = \beta_{23} = 0$ day$^{-1}$, reflecting the limited plasticity between established phenotypes. In contrast, the transformation rates $\beta_{13}$ and $\beta_{32}$ are assigned a value of $0.5$ day$^{-1}$, capturing more active polarization toward intermediate and M2 phenotypes in response to tumor-derived signals. To further reveal the heterogeneity in macrophage phenotypic transformation, the values of $\beta_{13}$ and $\beta_{32}$ are allowed to vary across the range of 0 to 1 day$^{-1}$ when generating our \textit{in silico} mouse cohort.
	
	\textbf{Death kinetics parameters.} According to clinical experimental data from Wacker et al. \citep{Wacker.VirchowsArchBCellPatholInclMolPathol.1986}, tissue-resident macrophages exhibit a half-life to 12.4 days. Therefore, we set the half-life of macrophages to 12.4 days ($t^{M_i}_{1/2} = 12.4$ days, $i=1,2,3$). Assuming exponential decay of the form $M_i(t)=M_i(0) {\rm e}^{d_i t}$, the half-life condition $\frac{1}{2} M_i(0)=M_i(0) \exp \left(-d_i t_{1/2}^{M_i}\right)$ leads to the death rate $d_i=\frac{\ln 2}{12.4\ \text{day}}=0.056 ~ \text{day}^{-1}$. This value is adopted for all macrophage subtypes in the model.
	
	\begin{table}[h]
		\centering
		\tiny
		\caption{Description of related parameter. }
		\begin{tabular}{ccccc}
			\toprule
			Parameter & Description & Values used & Unit  & Reference and estimation \\
			\midrule
			$C(0)$ & Initial value of tumor cells & $2\times10^5$ & $\rm cells$ & \citep{Reda.NatCommun.2022} \\
			$M_1(0)$ & Initial value of M1-type macrophages & $4\times10^6$ & $\rm cells$ & \citep{Shu.ApplMathModel.2020, Li.BullMathBiol.2024, Zhang.TheoryBiosci.2025} \\
			$M_2(0)$ & Initial value of M2-type macrophages & $0$ & $\rm cells$ & Estimation  \\
			$M_3(0)$ & Initial value of M3-type macrophages & $0$ & $\rm cells$ & Estimation  \\
			$\alpha$ & Tumor growth rate & $0.431$ & $\rm day^{-1}$ & \citep{Pillis.JTheorBiol.2006, Khalili.JTheorBiol.2023} \\
			$K$ & Tumor carrying capacity & $3\times10^8$ & $\rm cells$ &  \citep{Guo.DCDSB.2023, Mahasa.JTheorBiol.2016, Rhodes.JTheorBiol.2019}  \\
			$\delta$ & Tumor amplification coefficient by $M_2$ & $1\times 10^{-9}$ & $\rm cells^{-1}$  & Estimation \& \citep{Eftimie.JTheorBiol.2017} \\
			$p$ & Proportion of $M_3$ that promotes tumor expansion & $0.5$ & $\rm none$  & Estimation \\
			$\eta$ & Killing rate of tumor by $M_1$ & $1.1 \times 10^{-8}$ & $\rm cells^{-1} day^{-1}$ & Estimation \& \citep{Shariatpanahi.JTheorBiol.2018,Kreger.CancerImmunolRes.2023} \\
			$\kappa$ & Polarization rate of $M_1$ & $0.4$ & $\rm day^{-1}$  & \citep{Liao.MathBiosci.2023,Guo.DCDSB.2023} \\
			$\gamma$ & Polarization rate of $M_2$ & $0.1$ & $\rm day^{-1}$ & Estimation \&  \citep{Liao.MathBiosci.2023,Eftimie.JTheorBiol.2017} \\
			$\lambda$ & Polarization amplification coefficient by tumor cells & $0.5$ & $\rm none$ & Estimation \\
			$M_0$ & Baseline level of resting macrophages & $4 \times 10^5$ & $\rm cells$ & Estimation \& \citep{Zhang.TheoryBiosci.2025} \\
			$\beta_{13}$ & Transformation rate from $M_1$ to $M_3$ & $0.5$ & $\rm day^{-1}$ & Estimation  \\
			$\beta_{31}$ & Transformation rate from $M_3$ to $M_1$ & $0$ & $\rm day^{-1}$ & \citep{Eftimie.JTheorBiol.2021}  \\
			$\beta_{23}$ & Transformation rate from $M_2$ to $M_3$ & $0$  & $\rm day^{-1}$ & \citep{Eftimie.JTheorBiol.2021} \\
			$\beta_{32}$ & Transformation rate from $M_3$ to $M_2$ & $0.5$ & $\rm day^{-1}$ & Estimation \\
			$K_0$ & Tumor half-saturation constant in the $M_2$ polarization & $3 \times 10^{7}$ & $\rm cells$ & Estimation  \\
			$K_1$ & Tumor half-saturation constant in the $M_1$ transformation & $3 \times 10^{6}$ & $\rm cells$ & Estimation \\
			$K_2$ & Tumor half-saturation constant in the $M_3$ transformation & $3 \times 10^{6}$ & $\rm cells$ & Estimation \\
			$d_1$ & Death rate of $M_1$ & $0.056$ & $\rm day^{-1}$ & \citep{Wacker.VirchowsArchBCellPatholInclMolPathol.1986} \\
			$d_2$ & Death rate of $M_2$ & $0.056$ & $\rm day^{-1}$ & \citep{Wacker.VirchowsArchBCellPatholInclMolPathol.1986} \\
			$d_3$ & Death rate of $M_3$ & $0.056$ & $\rm day^{-1}$ & \citep{Wacker.VirchowsArchBCellPatholInclMolPathol.1986} \\
			\bottomrule
		\end{tabular}
		\label{tab:table1}
	\end{table}

	\printcredits
	
	\bibliographystyle{cas-model2-names}
	\bibliography{references.bib}
	
\end{document}